\documentclass[journal]{IEEEtran}
\usepackage{cite}
\usepackage{amsmath,amssymb,amsfonts,amsthm}
\usepackage{algorithmic}
\usepackage{graphicx}
\usepackage{textcomp}
\usepackage{subfig,booktabs,multirow}
\usepackage[unicode=true,bookmarks=false,breaklinks=false,pdfborder={0 0 1},colorlinks=false]{hyperref}
\hypersetup{	colorlinks,bookmarksopen,bookmarksnumbered,citecolor=blue,urlcolor=blue}
\usepackage [T1]{ fontenc }
\usepackage{tikz}
\usetikzlibrary{shapes,arrows,calc}
\tikzstyle{decision} = [diamond, draw, fill=blue!20, 
text width=5em, text centered, minimum width=4em, node distance=3cm, inner sep=0pt]
\tikzstyle{block1} = [draw, fill=red!30, minimum height=2em, minimum width=20em, rounded corners, text centered,text width=20em]
\tikzstyle{block2} = [draw, fill=red!30, minimum height=2em, minimum width=10em, rounded corners, text centered,text width=10em]
\tikzstyle{line} = [draw, -latex']
\tikzstyle{cloud} = [draw, ellipse,fill=red!20, node distance=3cm,
minimum height=2em]
\tikzstyle{block} = [draw, fill=red!30, minimum height=2em, minimum width=5em, rounded corners, text centered,text width=5em]
\tikzstyle{sum} = [draw, fill=red!30, circle, node distance=1cm]
\tikzstyle{input} = [coordinate]
\tikzstyle{output} = [coordinate]
\tikzstyle{pinstyle} = [pin edge={to-,thin,black}]
\tikzstyle{place} = [coordinate]
\newtheorem{theorem}{Theorem}
\newtheorem{lemma}{Lemma}
\newtheorem{remark}{Remark}
\begin{document}

	\title{Identifying Friction in a Nonlinear Chaotic System  Using a Universal Adaptive Stabilizer}
	
	\author{Ali~Wadi, 
        Shayok~Mukhopadhyay, 
        Lotfi~Romdhane.%
\thanks{A. Wadi is with the department of Mechanical Engineering, American University of Sharjah, P.O. Box 26666, Sharjah, United Arab Emirates.}
\thanks{S. Mukhopadhyay is with the department of Electrical Engineering, American University of Sharjah, P.O. Box 26666, Sharjah, United Arab Emirates.}
\thanks{L. Romdhane is with the department of Mechanical Engineering, American University of Sharjah, P.O. Box 26666, Sharjah, United Arab Emirates.}%
\thanks{Corresponding author S. Mukhopadhyay \texttt{smukhopadhyay@aus.edu}.}%
}

	
\maketitle
	
	\begin{abstract}
		This paper proposes a friction model parameter identification routine that can work with highly nonlinear and chaotic systems. The  chosen system for this study is a passively-actuated tilted Furuta pendulum, which is known to have a highly non linear and coupled model. The pendulum is tilted to ensure the existence of a stable equilibrium configuration for all its degrees of freedom, and the link weights are the only external forces applied to the system. A nonlinear analytical model of the pendulum is derived, and a continuous friction model considering static friction, dynamic friction, viscous friction, and the stribeck effect is selected from the literature. A high-gain Universal Adaptive Stabilizer (UAS) observer is designed to identify friction model parameters using joint angle measurements. The methodology is tested in simulation and validated on an experimental setup. Despite the high nonlinearity of the system, the methodology is proven to converge to the exact parameter values, in simulation, and to yield qualitative parameter magnitudes in experiments where the goodness of fit was around 85\% on average. The discrepancy between the simulation and the experimental results is attributed to the limitations of the friction model. The main advantage of the proposed method is the significant reduction in computational needs and the time required relative to conventional optimization-based identification routines. The proposed approach yielded more than 99\% reduction in the estimation time while being considerably more accurate than the optimization approach in every test performed. One more advantage is that the approach can be easily adapted to fit other models to experimental data.
	\end{abstract}
	\begin{IEEEkeywords}
		Furuta pendulum; Rotary inverted pendulum; Parameter Identification; Universal Adaptive Stabilizer; Viscous Friction; Dry Friction
	\end{IEEEkeywords}

	\section{Introduction}
	\IEEEPARstart{T}{he} Furuta pendulum (also known as a rotary inverted pendulum) is an under-actuated system, which is formed by two bodies in series. The first one is rotating around a vertical axis followed by a pendulum rotating around a horizontal axis. It was first proposed by Furuta \cite{Furuta_1992} mainly to test different control laws \cite{1999}.  The Furuta pendulum is mainly used as a testbed of nonlinear control strategies and is also of educational value. Many of the works in the literature use the Furuta pendulum to illustrate a proposed control law \cite{1999,Furuta_1992,Gafvert,Wadi18}. However, these works did not elaborate on the dynamic model of the Furuta pendulum and they limited their work to a simplified model \cite{Jadlovsk__2013,Gar_a_Alar_on_2012}.
	
	Few papers dealt with the modeling of the Furuta pendulum. They highlighted the complexity of the dynamic model and the necessity to simplify it based on several assumptions. The authors of \cite{Furuta_1992} and \cite{Cazzolato_2011} proposed a linearized model based on a small angle assumption. Others neglected one or more of the cross coupling terms relating the two rotations \cite{Furuta_1992,Gafvert,Gar_a_Alar_on_2012, Antonio_Cruz_2015,Antonio_Cruz_2014}. The authors of \cite{Cazzolato_2011} showed, through simulations, that the aforementioned assumption is not acceptable and it could have an important effect on the dynamic response. The authors of \cite{Antonio_Cruz_2015} derived a significantly simplified system and carried a brief three-second test. All the experimental works on the Furuta pendulum are directed at testing various control strategies \cite{Aguilar_2008,Freidovich_2007}. The nonlinear dynamic behavior of the system, which includes Coriolis and centrifugal forces in the two rotating frames of motion, is often overlooked or linearized. However, the knowledge of the dynamic behavior of a system is highly desirable to design advanced nonlinear controllers for a given nonlinear system. In this paper, the developed Furuta pendulum model, which encompasses the aforementioned dynamics, is validated experimentally. 
	
	The problem of identifying the different parameters of the Furuta pendulum is also treated in the literature \cite{Gar_a_Alar_on_2012, Schlee,Habib_2015}. Most of the authors dealing with this problem mentioned the high complexity of the model, especially the friction effects arising from the wide range of velocities the links go through. The parameters representing the dry and viscous frictions in the joints are one of the most difficult parameters to identify \cite{Wadi2017a,Antonio_Cruz_2015,Drewniak_2009, Makkar,Crisco_2007,Teixeira_2013, Fang_2012, Freidovich_2006}. Indeed, there is no universal model to represent damping, and the literature contains several models that attempt to capture the physics of this complex phenomenon \cite{Pennestr__2015,Keck_2017,Marques_2018}. 
	
	When the Furuta pendulum is used in a passive mode, the gravity, which is the only external load, actuates only the second joint. In this case, the Furuta pendulum does not have a stable configuration, which makes any reproducible experimentation nearly impossible. Moreover, this will cause the passive Furuta pendulum to exhibit chaotic behavior such that the motion of both links become heavily dependent and sensitive to initial conditions. This fact could explain the reason of the nonexistence in the literature of dynamic models of a passively actuated Furuta pendulum. The closest work to passive excitation was in \cite{Chenglin_Hu_2009} where friction in a cart pendulum system was identified. Nevertheless, that work decoupled the pendulum from the cart and dealt with them separately in the identification phase. A significant portion of the literature work, attempting parameter estimation, chooses an active excitation of the setup, often alongside a controller that helps stabilize the system. However, a controller might mask system dynamics and thus hinder parameter identification results. Moreover, a dynamic model of the passive system could capture the physics of the system without the perturbation of the input torque. In this paper, to actuate both joints in the passive mode, we propose to tilt the Furuta pendulum. In this case, the Furuta pendulum has a stable equilibrium configuration for both links, which is defined by gravity. The oscillations of the pendulum around this equilibrium could then be studied analytically and experimentally. After tilting, the model reveals a contribution of the gravity in the two generalized forces applied to the two joints. The existence of the stable equilibrium configuration, around which the perturbed system oscillates, is also seen. The developed model includes the effect of the static and dynamic frictions, viscous friction, and the stribeck effect. We propose a novel Universal Adaptive Stabilizer-based model identification algorithm and apply it to estimate the parameters of the friction model in the joints of the Furuta pendulum. The proposed approach is tested in simulation and in experiment, and it is compared against grey box optimization-based parameter estimation. After identification, validation is performed by simulating the system model using the identified parameters and comparing that to the experimental time response of the system. The coefficient of determination, $ R^2 $, is used as a benchmark statistic, and the computation time the algorithm takes to run is also noted for comparison with the optimization-based approach.
	
	The rest of the paper is organized as follows: after the introduction, Section II introduces the model developed. The identification approach is presented in Section III. The comparison of the results, along with a discussion, are presented in Section IV. Some concluding remarks and future work are presented in Section V.
	
	\section{Materials and Methods}
	
	\begin{figure}
		\centering
		\includegraphics[width=8.25 cm]{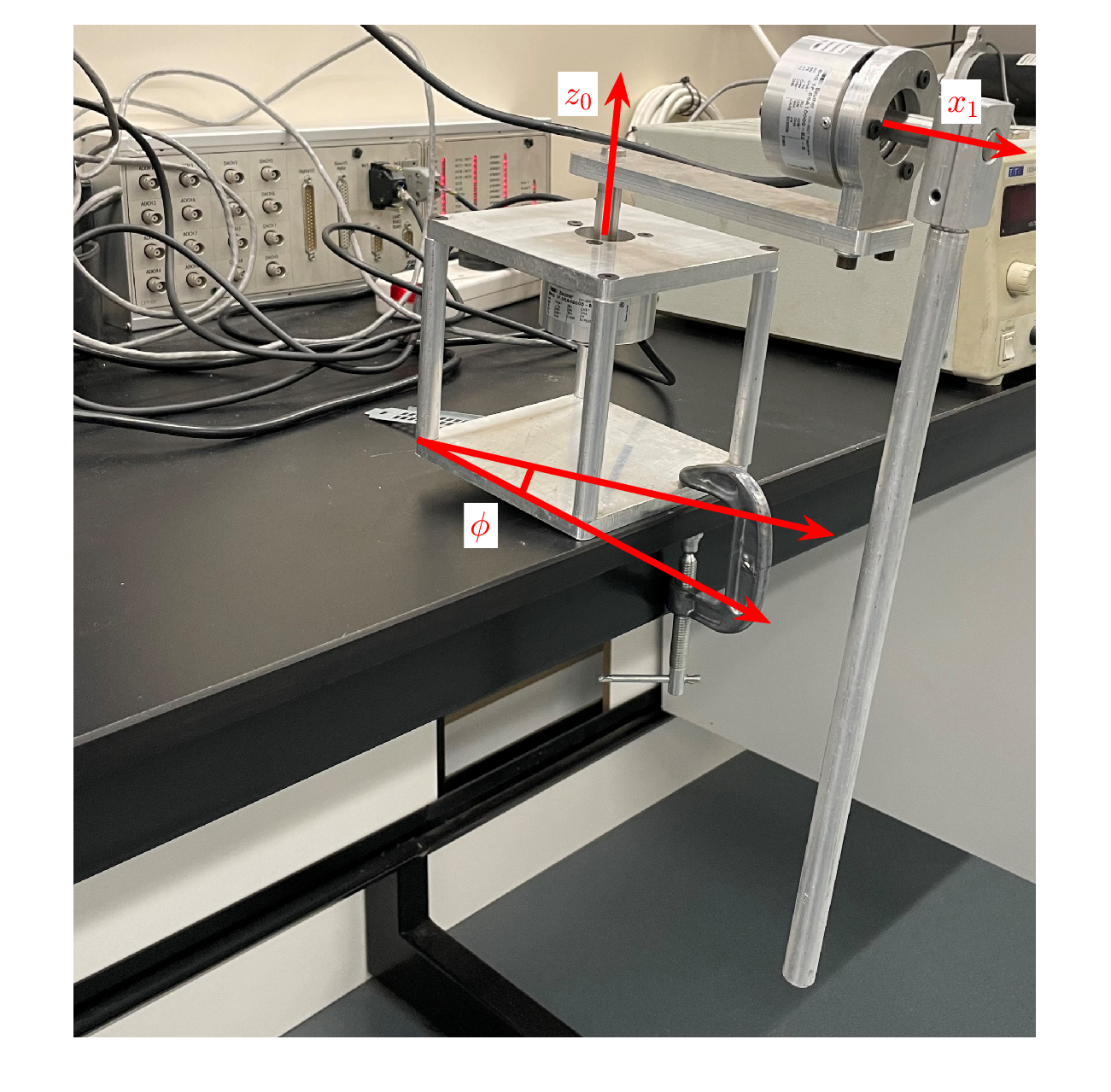}
		\caption{The Furuta pendulum in its tilted position.}\label{Hardware}
	\end{figure}  
	\begin{figure}
		\centering
		\includegraphics[width=8 cm]{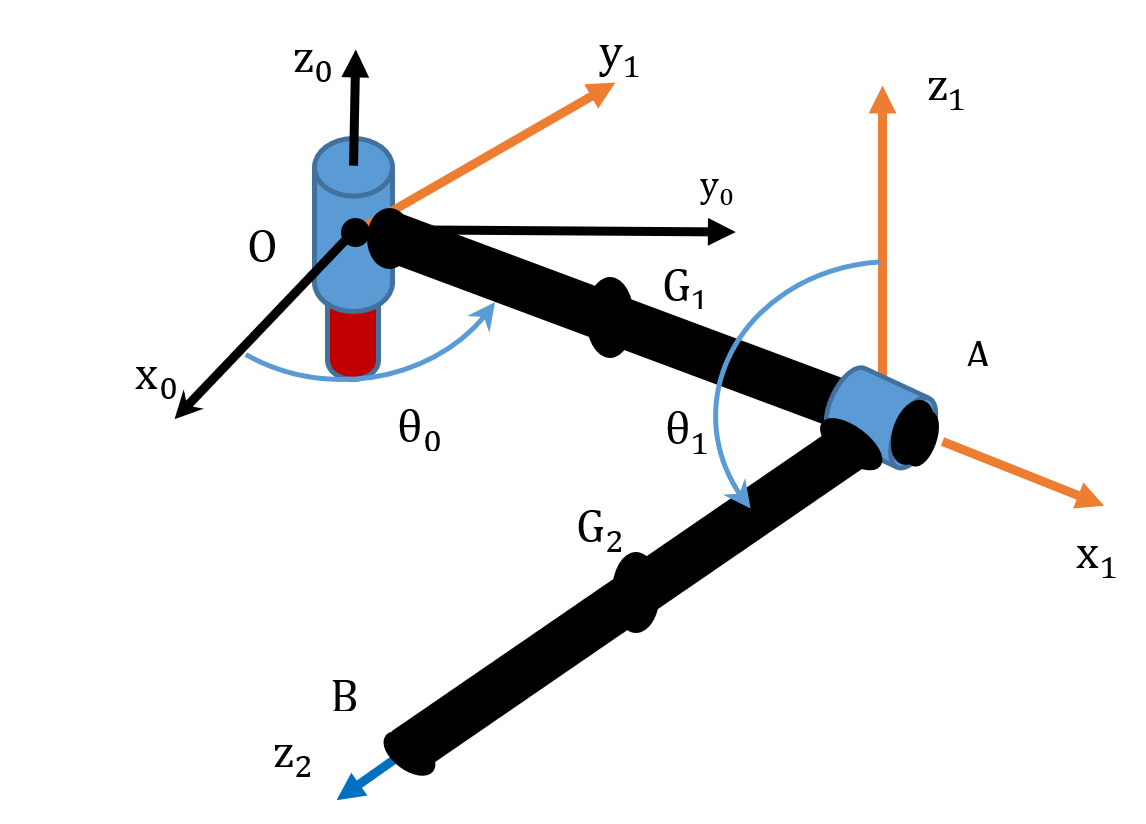}
		\caption{Schematic of the tilted setup}\label{Schematic}
	\end{figure}

		
		\begin{table*} 
			\caption{Parameters of the Furuta Pendulum Model.}
			\label{table:parameters}
			\centering
			\begin{tabular}{lcll}
				\toprule
				{Parameter}&  Unit	& {Arm link}	& {Pendulum link}\\
				\midrule
				Mass &$ [Kg] $						& $ m_1=0.370 $						& $ m_2=0.128 $\\
				Inertia &$ [Kg \, m^2] $			& $ j_{1_z} \,= 3.09\times 10^{-3} $	& $ \begin{matrix}	j_{2_{x}}=5.25\times 10^{-3} \\ j_{2_{y}}=5.25\times 10^{-3} \\ j_{2_{z}}=2.91\times 10^{-6}		\end{matrix} $\\
				Pivot to $ CG $ distance &$ [m] $	& $ l_1\;=OG_1=0.0620 $			&	$ l_2=AG_2=0.0620 $\\
				Total length &$ [m] $				& $ L_1\;=OA=0.216 $			&	$ L_2=AB=0.316 $\\
				\bottomrule
			\end{tabular}
		\end{table*}
		\subsection{The Furuta Pendulum}
		Figure~\ref{Hardware} shows the Furuta pendulum in its tilted position. The angle $ \phi $ is taken around the fixed y-axis, which is pointing out of the page. The obtained tilted reference frame is $x_0-y_0-z_0$, where $z_0$ is now the axis of the first joint between the ground and the arm. The joint is disconnected from the existing motor and the pendulum is subjected only to its weight. Figure~\ref{Schematic} shows the schematic of the Furuta pendulum in its tilted position. The arm is rotating around the fixed axis $z_0$ by an angle $\theta_0$ and the pendulum is rotating with respect to the moving axis $x_1$ by an angle $\theta_1$. The different parameters of the two links are defined in Table~\ref{table:parameters}. 
		
		\subsection{Dynamic Model of the Furuta Pendulum}
		Based on the schematic of the Furuta pendulum, shown in Figure~\ref{Schematic}, the dynamic model is elaborated. The Euler-Lagrange Formulation is used to derive the equations of the motion of the system.  Two generalized coordinates are required to fully describe the dynamics of the two-degrees-of-freedom system, and they are the angular displacements of the Arm and Pendulum links given by
		
		\begin{equation}\label{q}
			q=\begin{bmatrix}	\theta_0\\\theta_1 \end{bmatrix}
		\end{equation}
		
		The Euler-Lagrange equation, for the $ i^{th} $ generalized coordinate of the system can be written as follows:
		
		\begin{equation}\label{EL}
			\frac{d}{dt} \left( \frac{\partial T}{\partial \dot{q}_i} \right) - \frac{\partial T}{\partial q_i} + \frac{\partial V}{\partial q_i} =Q_i\,
		\end{equation}
		
		where the kinetic energy, $ T $, is written as the sum of the linear and angular kinetic energies of the two links, which  is given by
		
		\begin{equation}\label{T}
			T = \sum_{i=1}^{2} \frac{1}{2} m_i {v}_{G_i}^{2} + \sum_{i=1}^{2} \frac{1}{2} m_{i}  {\omega}_{i-1}^{T}({J}_i {\omega}_{i-1}),
		\end{equation}
		
		where ${\omega}_{0}=\dot{\theta}_0 {z}_0$, ${\omega}_{1}=\dot{\theta}_0 {z}_0+\dot{\theta}_1 {x}_1$ and $ {v}_{G_i} $ is the velocity of $G_i$ ($ i=1 $ for the Arm and $  i=2 $ for the Pendulum).
		The potential energy of gravity is obtained following the scalar product between the global vertical axis and the center of gravity position vector in the local tilted frame of reference.
		
		\begin{equation}\label{V}
			V = - m_1 g ({z} \cdot {OG_1}) - m_2 g ({z} \cdot {OG_2})
		\end{equation}
		
		The generalized forces due to friction in the joints are written in \eqref{Qf} as the sum of friction torques acting on the links. This model, proposed in \cite{Brown_2016}, is chosen for numerous reasons. It is versatile in modeling different friction phenomenon, its parameters are physically meaningful and not added on an ad hoc basis, and it is continuously differentiable. The latter two properties are required by the proposed algorithm as upper and lower limits on the estimated parameters steady-state values need to be placed in the algorithm, and the time derivative of the friction term is a necessary part in the algorithm formulation.
		
		\begin{equation}\label{Qf}
			\begin{split}
				{f}_i(\dot{\theta}_i) =& F_{n_i} \mu_{d_i} tanh \left( 4\;\frac{\dot{\theta}_i}{\dot{\theta}_{t_i}}\right) 
				+	F_{n_i}  
				\cfrac{(\mu_{s_i} -\mu_{d_i})\; \cfrac{\dot{\theta}_i}{\dot{\theta}_{t_i}} }{ \cfrac{1}{4} \left(\cfrac{\dot{\theta}_i}{\dot{\theta}_{t_i}} \right)^2 + \cfrac{3}{4} }\\
				&+ \mu_{v_i} \dot{\theta}_i tanh \left( 4\;\frac{F_{n_i}}{F_{nt_i}}\right)
			\end{split}
		\end{equation}
		
		where $ F_{n_i} $ is the normal force in joint $ i $, which is a function of the system parameters as well as the state $ q $ and its derivatives; $ \mu_{d_i} $, $ \mu_{s_i} $, and $ \mu_{v_i} $ are the dynamic, static and viscous friction coefficients; $ \dot{\theta}_{t_i} $ is the transition angular velocity responsible for shaping the stribeck curve; and $ F_{nt_i} $ is the transition force that activates the viscous friction term. 
		
		The final model is given in the following form:
		
		\begin{equation}\label{Model}
			{H(q)} {\ddot{q}} + {B(q,\dot{q})} + {G(q)}={Q(\dot{q})}
		\end{equation}
		
		where the generalized inertia matrix is given by
		
		\begin{equation}\label{H}
			{H(q)}= \begin{bmatrix}	\begin{pmatrix}j_{1_z} + m_1l_1^2 \\+ (m_2 l_2^2 + j_{2_y})\; sin(\theta_1)^2 \\+j_{2_z}\; cos(\theta_1)^2\end{pmatrix} & -m_2 l_2 L_1\; cos(\theta_1)\\ \\
				-m_2 l_2 L_1\; cos(\theta_1) & j_{2_x} + m_2 l_2^2\end{bmatrix}
		\end{equation}
		
		the centrifugal and coriolis forces are given by
		
		\begin{equation}\label{B}
			{B(q,\dot{q})}=\begin{bmatrix} \begin{pmatrix}-(m_2l_2^2-j_{2_y}+j_{2_x})\;sin(2\theta_1)\;\dot{\theta}_0\; \dot{\theta}_1\\-m_2 l_2 L_1\; sin(\theta_1)\;\dot{\theta}_1^2\end{pmatrix} \\ \\
				(m_2l_2^2-j_{2_y}+j_{2_z})sin(\theta_1)\;cos(\theta_1)\;\dot{\theta}_0^2	\end{bmatrix}
		\end{equation}
		
		the effect of gravity is given by
		
		\begin{equation}\label{G}
			{G(q)}=\begin{bmatrix} \begin{pmatrix} m_1 g l_1\; sin(\theta_0)\;sin(\phi) \\ \\ +m_2 g \begin{pmatrix} L_1\; sin(\theta_0)\\ -l_2 sin(\theta_1)\;cos(\theta_0) \end{pmatrix}\; sin(\phi) \end{pmatrix} \\ \\
				-m_2 g l_2 \begin{pmatrix}sin(\theta_0)\; cos(\theta_1)\; sin(\phi)\\ - sin(\theta_1)\; cos(\phi)\end{pmatrix}	\end{bmatrix}
		\end{equation}
		
		and the generalized forces due to friction are given by
		
		\begin{equation}\label{Q}
			{Q(\dot{q})}=\begin{bmatrix} {f}_0(\dot{\theta}_0) \\ {f}_1(\dot{\theta}_1)	\end{bmatrix}
		\end{equation}
		
		where the full term is presented in Equation~\eqref{Qf}.
		
		\subsection{Setup}
		The experimental setup is comprised of a Furuta Pendulum and a Dspace 1104 data acquisition system (Figure~\ref{Hardware}). The system is equipped with two Baumer rotary incremental encoders with a resolution of $ 40,000 $ counts/rev to record the angular positions of the two joints as a function of time. A sampling frequency of $ 10 $ \textit{kHz}  was used to capture the time response of the system. The acquisition time covers the full motion until the links stop moving.  
		
		Now that the setup is explained, the preliminaries required for Universal Adaptive Stabilizer (UAS) based parameters estimation are provided in Section \ref{uinfo}. This is followed by description of the proposed parameters identification framework in Section \ref{Sec:paraID}, and the mathematical justification in Section \ref{mjusti}.
		
		\section{UAS-based High Gain Adaptive Parameter Identification Routine}
		\subsection{Universal Adaptive Stabilizer (UAS)}\label{uinfo}
		Nussbaum functions are switching functions used in high-gain adaptive control such as with a UAS. They are employed where the control direction is generally unknown. Hence, they are selected as no assumptions and no restrictions on the parameter signs and magnitudes are made. This has been successfully employed in \cite{Ali2017,Usman2019} in estimating the parameters of various other systems. Equation~\eqref{Ni1} shows some examples of Nussbaum functions.
		
		A Nussbaum function is a piecewise right continuous and locally Lipschitz function $N(.): [k',\infty) \to \mathbb{R}$ that satisfies
		
		\begin{equation}\label{Nuss}
			\begin{aligned}
				&\sup\limits_{k>k_0} \frac{1}{k-k_0} \int_{k_0}^{k} N(\tau) \,d\tau =+ \infty\\
				&\text{and}\\
				&\inf\limits_{k>k_0} \frac{1}{k-k_0} \int_{k_0}^{k} N(\tau) \,d\tau =- \infty\\
			\end{aligned}
		\end{equation}
		A Mittag-Leffler function was chosen to act as a Nussbaum function for the purpose of tuning the design parameters. It is of interest to note that a Mittag-Leffler function can act as a Nussbaum function under certain conditions that are documented in \cite{li_when_2009}. The Mittag-Leffler function depends on two positive real parameters, $\alpha$ and $\beta$. Authors in \cite{li_when_2009} determined the conditions under which the Mittag-Leffler function acts as a Nussbaum function, which are if $\alpha\in(2,3]$ and $\beta=1$. The Nussbaum function of Mittag-Leffler type is given in \eqref{Ni1} as $N_1(-\lambda z^\alpha)$.
		
		\begin{equation}\label{Ni1}		N_1(z)=\sum_{\gamma=0}^{\infty} \frac{z^\gamma}{\Gamma(\alpha\gamma+\beta)},\, \alpha\in(2,3],\, \beta=1\end{equation}
		\begin{equation}\label{Ni2}		N_2(z)=z\; cos(\sqrt{|z|})				\end{equation}
		\begin{equation}\label{Ni3}		N_3(z)=z^{2}\; cos(|z|)					\end{equation}
		\begin{equation}\label{Ni4}		N_4(z)=cos(\frac{\pi z}{2})\;e ^{z^{2}}	\end{equation}
		
		The above functions differ in how fast they vibrate. In other works \cite{Usman2019,Ali2017}, it was advantageous to choose a rapidly vibrating Nussbaum functions in applications involving DC motor and Li-ion Battery Dynamics. For that reason, the Mittag-Leffler Nussbaum function is used in this work.
		
		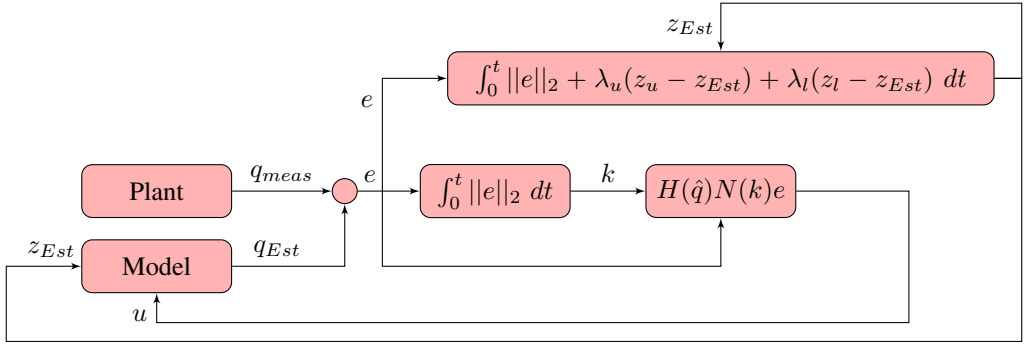
\begin{figure*}
			\centering
			\begin{tikzpicture}[auto, node distance=1.75cm,>=latex']
				\node [input] (input){};
				\node [place, below of=input, node distance=2cm] (p1) {};
				
				\node [block, right of=input, node distance=2cm] (UVP) {Plant};
				\node [block, below of=UVP, node distance=1cm] (UVM) {Model};
				\node [sum, right of=UVP, node distance=2.5cm] (sum2) {};
				\node [block, right of=sum2, node distance=2cm] (int) {$ \int_{0}^{t} ||e||_{2} \; dt$};
				\node [block, right of=int, node distance=3cm] (ML) {$H(\hat{q})N(k) e$};		
				\node [block1, above of=ML, node distance=1.5cm] (ParaUP) {$ \int_{0}^{t} ||e||_{2} + \lambda_{u}(z_{u} - z_{Est}) + \lambda_{l}(z_{l} - z_{Est}) \; dt $};
				
				\node [output, right of=ML, node distance=2.5cm] (output) {};
				\node [place, right of=sum2, node distance=.5cm] (p2) {};
				\node [place, below of=p2, node distance=1cm] (p3) {};
				\node [place, below of=UVM, node distance=0.75cm] (p4) {};
				\node [place, right of=ParaUP, node distance=4cm] (p5) {};
				\node [place, above of=p5, node distance=1cm] (p6) {};
				\node [place, above of=ParaUP, node distance=.5cm] (p7) {};
				
				\draw [->] (UVM) -| node[pos=0.2] {${q}_{Est}$} (sum2);
				\draw [->] (UVP) -- node[pos=0.5] {${q}_{meas}$} (sum2);
				\draw [->] (sum2) -- node[pos=0.2] {$e$} (int);
				\draw [->] (int) -- node {$k$} (ML);
				\draw [->] (p2) --  (p3) -| node {} (ML);
				\draw [->] (ML) -- node {} (output) |- (p4) -- node[pos=0.25] {$u$} (UVM.south);
				\draw [->] (p2) |- node[pos=0.4] {$e$} (ParaUP);
				\draw [->] (ParaUP.east) -- (p5) |- (p1) |- node[pos=0.8] {$z_{Est}$} (UVM.west);
				\draw [->] (ParaUP.east) -| (p6) -| (p7) -- node[pos=0.1,xshift=-0.8 5cm,yshift=2mm] {$z_{Est}$} (ParaUP.north);
			\end{tikzpicture}
			\caption{Algorithm Flowchart}
			\label{AlgFlow}
		\end{figure*}
		
		\subsection{Parameter Identification}\label{Sec:paraID}
		The models in Equation~\eqref{Model} and Equation~\eqref{Qf} are rewritten here to setup the parameter identification procedure as the following:
		
		\begin{equation}\label{EstModel}
			{H(\hat{q})} {\ddot{\hat{q}}} + {B(\hat{q},\dot{\hat{q}})\dot{\hat{q}}} + {G(\hat{q})}	
			=
			{\hat{Q}(\dot{\hat{q}})}	+  {u}\\
		\end{equation}
		here $ {\ddot{\hat{q}}, \dot{\hat{q}}, \hat{q}} $ are the estimated  acceleration, velocity and position, respectively, and ${u}$ is the UAS input given by Equation~\eqref{unk}. Also note that the parameters of $ H $, $ B $, and $ G $ in \eqref{EstModel} are known, and $ {\hat{Q}}=\begin{bmatrix}
			{\hat{f}_0(\dot{\hat{\theta}}_0)}&{\hat{f}_1(\dot{\hat{\theta}}_1)}
		\end{bmatrix}^T$ is the estimated friction vector whose entries are given in \eqref{EstQf1}, and \eqref{EstQf2}.
		\begin{equation}\label{EstQf1}
			\begin{split}
				\hat{f}_0(\dot{\hat{\theta}}_0) =& F_{n_0} \hat{z}_1 tanh \left( 4\;\frac{\dot{\hat{\theta}}_0}{\hat{z}_4}\right) 
				+	F_{n_0}  
				\cfrac{(\hat{z}_2 -\hat{z}_1)\; \cfrac{\dot{\hat{\theta}}_0}{\hat{z}_4} }{ \cfrac{1}{4} \left(\cfrac{\dot{\hat{\theta}}_0}{\hat{z}_4} \right)^2 + \cfrac{3}{4} }\\
				&+ \hat{z}_3 \dot{\hat{\theta}}_0 tanh \left( 4\;\frac{F_{n_0}}{\hat{z}_5}\right)
			\end{split}
		\end{equation}
		\begin{equation}\label{EstQf2}
			\begin{split}
				\hat{f}_1(\dot{\hat{\theta}}_1) =& F_{n_1} \hat{z}_6 tanh \left( 4\;\frac{\dot{\hat{\theta}}_1}{\hat{z}_9}\right) 
				+	F_{n_1}  
				\cfrac{(\hat{z}_7 -\hat{z}_6)\; \cfrac{\dot{\hat{\theta}}_1}{\hat{z}_9} }{ \cfrac{1}{4} \left(\cfrac{\dot{\hat{\theta}}_1}{\hat{z}_9} \right)^2 + \cfrac{3}{4} }\\
				&+ \hat{z}_8 \dot{\hat{\theta}}_1 tanh \left( 4\;\frac{F_{n_1}}{\hat{z}_{10}}\right)
			\end{split}
		\end{equation}
		
		In these equations $ \hat{z}_n $ represents the frictional model parameter estimates where $ z_1 = \mu_{d_0} $, $ z_2 = \mu_{s_0} $, $ z_3 = \mu_{v_0} $, $ z_4 = \dot{\theta}_{t_0} $, $ z_5 = F_{nt_0} $, $ z_6 = \mu_{d_1} $, $ z_7 = \mu_{s_1} $, $ z_8 = \mu_{v_1} $, $ z_9 = \dot{\theta}_{t_1} $, $ z_{10} = F_{nt_1} $, the following equations complete the high-gain universal adaptive observer used for parameters estimation in this work.
		
		\begin{equation}\label{e} e(t)=\dot{q}(t) -	\dot{\hat{q}}(t)  \end{equation}
		\begin{equation}\label{kd} \dot{k}(t)= \left\lVert e(t) \right\rVert^2_2,\; k(t_0)=k_0>0\end{equation}
		\begin{equation}\label{Nk} N(k(t))=N_1(-\lambda k(t)^\alpha) \end{equation}
		\begin{equation}\label{unk} u(t)=	H(\hat{q}) N(k(t))e(t) \end{equation}
		
		In Equation~\eqref{Nk}, $\lambda = 1$, and $\alpha =3$, thus resulting in a Nussbaum function of the Mittag-Leffler form. In Equation~\eqref{unk} $ H $ is the generalized inertia matrix from \eqref{H}. 
		Equation~\eqref{adapt} is used in identifying the friction model parameters $\hat{z}_n$ in Equation~\eqref{EstQf1}, and Equation~\eqref{EstQf2}.
		
		\begin{equation}\label{adapt} 
			\dot{\hat{z}}_n(t)=(\gamma+\lambda_{nu} (z_{nu}-\hat{z}_{n}(t))+\lambda_{nl} (z_{nl}-\hat{z}_{n}(t)) ) \left\lVert e(t) \right\rVert_2
		\end{equation}
		
		where $\hat{z}_n(t)\in \Bbb R$ represents each parameter that needs to be identified, and $n \in {1, 2, \dots, N}$ refers to the parameter number where $N=10$ is the number of unknown parameters that need to be identified. Also, $\gamma$ is a tuning parameter to control the rate at which the parameters adapt.
		
		The proposed parameter evolution equation requires steady state upper and lower bounds $z_{nu}$ and $z_{nl}$, as well as their confidence levels $\lambda_{nu}$ and $\lambda_{nl}$, respectively. The parameters $\lambda_{nu},\lambda_{nl}>0$ and the parameters $z_{nu},z_{nl},\lambda_{nu},\lambda_{nl} \in \Bbb R$. The upper and lower bounds $z_{nu}$ and $z_{nl}$ are constants which represent the limits within which $\hat{z}_n(t)$ is desired to settle, as $t \to \infty$. The constants $\lambda_{nu}$ and $\lambda_{nl}$ represent the users confidence in their choice of steady state upper and lower bounds $z_{nu}$ and $z_{nl}$, respectively. The flowchart of the proposed parameter identification routine is presented in Figure~\ref{AlgFlow}.
		
		\subsection{Mathematical Justification}\label{mjusti}
		
		This section presents the mathematical justification behind the proposed friction model parameters identification routine. First, Lemma~\ref{adaptiveproof} shows that the adaptive observer's error heading to zero leads to boundedness of the parameters. Second, Theorem~\ref{boundnessproof} proves that the adaptive observer's error is driven to zero by the designed UAS input. Finally, Theorem~\ref{convergeproof} establishes the convergence of the estimated parameters to the true parameters values.
		
		\begin{lemma}\label{adaptiveproof}
			Let $ \hat{z}_n $ be given as in Equation~\eqref{adapt}. Given that $ \lambda_{nu}, \lambda_{nl}>0 $, \textbf{if} the parameter adaptation proposed in Equation~\eqref{adapt} is used and 
			$ e(t)  \to 0 $ as $ t\to \infty $, \textbf{then} $ \hat{z}_n $ is bounded $ \forall t>t_0 $, for $ n \in \{1,2, \dots, 10\} $.
		\end{lemma}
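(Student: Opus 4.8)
The plan is to treat \eqref{adapt} as, for each fixed $n$, a scalar affine ordinary differential equation in $\hat{z}_n$ driven by the nonnegative, time-varying coefficient $\lVert e(t)\rVert_2$, and then to exhibit a Lyapunov function certifying boundedness. First I would collect the constant terms: writing $a_n := \lambda_{nu}+\lambda_{nl}$ and $c_n := \gamma + \lambda_{nu}z_{nu} + \lambda_{nl}z_{nl}$, Equation~\eqref{adapt} becomes
\[
\dot{\hat{z}}_n(t) = \bigl(c_n - a_n \hat{z}_n(t)\bigr)\,\lVert e(t)\rVert_2 .
\]
Since $\lambda_{nu},\lambda_{nl}>0$ by hypothesis, $a_n>0$, and the associated equilibrium is the constant $z_n^\star := c_n/a_n$, which is just the confidence-weighted average of the upper and lower bounds (biased by $\gamma$).

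Next I would introduce the Lyapunov candidate $V_n(t) := \tfrac{1}{2}\bigl(\hat{z}_n(t)-z_n^\star\bigr)^2 \ge 0$ and differentiate along the trajectory. Using $c_n - a_n \hat{z}_n = -a_n(\hat{z}_n - z_n^\star)$, a one-line computation gives
\[
\dot{V}_n(t) = -\,a_n\bigl(\hat{z}_n(t)-z_n^\star\bigr)^2 \,\lVert e(t)\rVert_2 \le 0,
\]
because $a_n>0$ and $\lVert e(t)\rVert_2\ge 0$ for every $t$. Hence $V_n$ is nonincreasing, so $V_n(t)\le V_n(t_0)$ for all $t>t_0$, which yields the explicit bound $|\hat{z}_n(t)-z_n^\star|\le |\hat{z}_n(t_0)-z_n^\star|$ and therefore $|\hat{z}_n(t)|\le |z_n^\star| + |\hat{z}_n(t_0)-z_n^\star|$. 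Running this argument for each $n\in\{1,2,\dots,10\}$ completes the proof. Equivalently, the same conclusion can be read off by the time reparametrization $\tau(t):=\int_{t_0}^{t}\lVert e(s)\rVert_2\,ds$, under which the equation becomes the autonomous stable linear system $dz/d\tau = c_n - a_n z$ with solution $\hat{z}_n = z_n^\star + \bigl(\hat{z}_n(t_0)-z_n^\star\bigr)e^{-a_n\tau}$, manifestly bounded since $\tau\ge 0$ and $a_n>0$.

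I do not expect a genuine obstacle here; the crux is simply recognizing the affine, dissipative structure of \eqref{adapt} and centering the Lyapunov function at $z_n^\star$. It is worth remarking that boundedness does not actually require the full hypothesis $e(t)\to 0$: the dissipation inequality above holds for any $\lVert e(t)\rVert_2\ge 0$, since the sign of $\dot{V}_n$ is governed entirely by the negative-semidefinite factor $-a_n(\hat{z}_n-z_n^\star)^2$. The assumption $e(t)\to 0$, guaranteed by Theorem~\ref{boundnessproof}, is needed only downstream, where one argues that $\hat{z}_n$ in fact converges to a limit rather than merely remaining bounded; for the present claim the nonnegativity of $\lVert e\rVert_2$ suffices.
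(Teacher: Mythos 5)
Your proof is correct, and it takes a genuinely different route from the paper's. The paper rewrites \eqref{adapt} as the LTI system \eqref{adapt_expand}, $\dot{\hat{z}}_n + (\lambda_{nu}+\lambda_{nl})\hat{z}_n = (\gamma+\lambda_{nu}z_{nu}+\lambda_{nl}z_{nl})\lVert e\rVert_2$, and then bounds the convolution-integral solution \eqref{adapt_solution} using the hypothesis $e(t)\to 0$. Note, however, that this rewriting is not literally equivalent to \eqref{adapt}: expanding \eqref{adapt} puts the factor $\lVert e(t)\rVert_2$ on the damping term as well, i.e. $\dot{\hat{z}}_n + (\lambda_{nu}+\lambda_{nl})\lVert e(t)\rVert_2\,\hat{z}_n = (\gamma+\lambda_{nu}z_{nu}+\lambda_{nl}z_{nl})\lVert e(t)\rVert_2$, so the system is linear time-varying rather than LTI. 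Your argument keeps that structure intact: centering the Lyapunov function at the equilibrium $z_n^\star = (\gamma+\lambda_{nu}z_{nu}+\lambda_{nl}z_{nl})/(\lambda_{nu}+\lambda_{nl})$ yields $\dot{V}_n = -(\lambda_{nu}+\lambda_{nl})(\hat{z}_n - z_n^\star)^2\lVert e\rVert_2 \le 0$ and hence the monotone bound $|\hat{z}_n(t)-z_n^\star|\le|\hat{z}_n(t_0)-z_n^\star|$, and your time-reparametrization $\tau(t)=\int_{t_0}^{t}\lVert e(s)\rVert_2\,ds$ recovers the same conclusion in closed form. What your approach buys is an exact treatment of the state-dependent gain that the paper's LTI form glosses over, an explicit initial-condition-dependent bound rather than an asymptotic one, and the correct observation that $e(t)\to 0$ is not actually needed for boundedness --- only $\lVert e\rVert_2\ge 0$ is used --- whereas the paper's proof invokes $e\to 0$ to control the integral term in \eqref{adapt_solution}. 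Both arguments reach the stated conclusion; yours is the tighter one.
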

		\begin{proof}[Proof of Lemma~\ref{adaptiveproof}]
			Observe that \eqref{adapt} is a stable Linear Time Invariant (LTI) system driven by the input $ (\gamma+\lambda_{nu} z_{nu}+\lambda_{nl} z_{nl} ) \left\lVert e(t) \right\rVert_2  $ as shown in Equation~\eqref{adapt_expand}.
			
			\begin{equation}\label{adapt_expand} 
				\dot{\hat{z}}_n(t) + (\lambda_{nu}+\lambda_{nl}) \hat{z}_{n}(t)=(\gamma+\lambda_{nu} z_{nu}+\lambda_{nl} z_{nl} ) \left\lVert e(t) \right\rVert_2 
			\end{equation}
			
			The solution of the above equation can be obtained by the properties of the convolution integral to be
			
			\begin{equation}\label{adapt_solution}
				\begin{split}
					&\hat{z}_n(t)=\hat{z}_n(t_0)e^{-(\lambda_{nu}+\lambda_{nl})t} \\
					&+(\gamma+\lambda_{nu} z_{nu}+\lambda_{nl} z_{nl})\int_{t_0}^{t}  \left\lVert e(t-\tau) \right\rVert_2 e^{-(\lambda_{nu}+\lambda_{nl})\tau} d\tau 
				\end{split}
			\end{equation}
			
			Because $ \hat{z}_n(t),\lambda_{nl},\lambda_{nu},z_{nl},z_{nu} $ are all bounded for $ n \in \{1,2, \dots, 10\} $, and because $ \left\lVert e(t-\tau) \right\rVert_2>0 $ for all $ t $, then Equation~\eqref{adapt_solution} yields $ \hat{z}_n(t) \to \hat{z}_{n_\infty} $ that is bounded. This can be seen from the first term on the right hand of \eqref{adapt_solution} growing smaller as $ t\to\infty $, and following the assumption that $ e\to0 $ as $ t\to\infty $, the second term under the integral is bounded. 
			
		\end{proof}
		
		With the boundedness of the parameters established in Lemma~\ref{adaptiveproof}, the assumption that the estimation error grows to zero is now proven in Theorem~\ref{boundnessproof}.

		\begin{theorem}\label{boundnessproof}
			Let $ e(t)=\dot{q}(t) -\dot{\hat{q}}(t) $ as in Equation~~\eqref{e}, where the dynamics of $ q(t) $ are given by Equation~\eqref{Qf} --- Equation~\eqref{Q}, and the dynamics of $ \hat{q}(t) $ are given by Equation~\eqref{EstModel} --- \eqref{EstQf2}, $ u $ is as given in Equation~\eqref{unk}. \textbf{If} $H(q(t))$ and $H(\hat{q}(t))$ are invertible for all $t\geq t_0$ \textbf{Then}, $ e(t) \to 0 $ as $ t\to \infty $.
		\end{theorem}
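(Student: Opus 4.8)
The plan is to reduce the coupled second-order dynamics to a first-order error system in $e$ and then apply a Nussbaum-gain argument followed by Barbalat's lemma. First I would use the hypothesis that $H(q(t))$ and $H(\hat{q}(t))$ are invertible for all $t\ge t_0$ to solve \eqref{Model} and \eqref{EstModel} for the true and estimated accelerations, giving $\ddot{q}=H(q)^{-1}\!\left[Q(\dot q)-B(q,\dot q)-G(q)\right]$ and $\ddot{\hat q}=H(\hat q)^{-1}\!\left[\hat Q(\dot{\hat q})-B(\hat q,\dot{\hat q})\dot{\hat q}-G(\hat q)\right]+H(\hat q)^{-1}u$. The key structural observation is that the UAS input \eqref{unk} is premultiplied by $H(\hat q)$, so $H(\hat q)^{-1}u=N(k)e$ exactly. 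Differentiating \eqref{e} then yields the compact error system $\dot e=\Phi(t)-N(k(t))\,e$, where $\Phi(t)$ lumps the mismatch between the true and estimated inertia, Coriolis/centrifugal, gravity, and friction terms. This is the step where the invertibility hypothesis is essential.

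Next I would take the Lyapunov-type candidate $V=\tfrac12\left\lVert e\right\rVert_2^2$. Using $\dot k=\left\lVert e\right\rVert_2^2$ from \eqref{kd}, its derivative is $\dot V=e^{T}\Phi-N(k)\dot k$. Integrating from $t_0$ to $t$ and changing the variable of integration via $s=k(\tau)$ in the Nussbaum term converts it to $\int_{k_0}^{k(t)}N(s)\,ds$, giving $V(t)=V(t_0)+\int_{t_0}^{t}e^{T}\Phi\,d\tau-\int_{k_0}^{k(t)}N(s)\,ds$. Since $\dot k\ge 0$, the gain $k(t)$ is nondecreasing, so it either stays bounded or diverges to $+\infty$.

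The heart of the argument is to rule out $k(t)\to\infty$ by contradiction, invoking the Nussbaum property \eqref{Nuss}: along sequences where $k\to\infty$ the averaged integral $\frac{1}{k-k_0}\int_{k_0}^{k}N(s)\,ds$ oscillates between $+\infty$ and $-\infty$, so $-\int_{k_0}^{k(t)}N(s)\,ds$ becomes arbitrarily large and positive along one subsequence and arbitrarily negative along another, which is incompatible with $V(t)\ge 0$ unless the cross term $\int_{t_0}^{t}e^{T}\Phi\,d\tau$ grows at least as fast. I expect this cross term to be the \emph{main obstacle}: one must show $\Phi$ is bounded---which in turn needs a priori boundedness of the physical states, the determinant of $H$ bounded away from zero, and the boundedness of the continuously differentiable friction model \eqref{Qf}---so that a Young's-inequality estimate $e^{T}\Phi\le\tfrac12\left\lVert\Phi\right\rVert_2^{2}+\tfrac12\left\lVert e\right\rVert_2^{2}$ grows only linearly in $k-k_0=\int\left\lVert e\right\rVert_2^2\,d\tau$, i.e.\ strictly slower than the Nussbaum term. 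This produces the contradiction and forces $k$ to be bounded.

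Finally, with $k$ bounded and monotone, $k(t)\to k_\infty<\infty$, so \eqref{kd} gives $\int_{t_0}^{\infty}\left\lVert e\right\rVert_2^2\,d\tau=k_\infty-k_0<\infty$, i.e.\ $e\in L_2$. Because $N(k)$ is then bounded ($N$ is continuous and $k$ lies in the compact set $[k_0,k_\infty]$) and $\Phi$ is bounded, the error system shows $\dot e$ is bounded, so $\left\lVert e\right\rVert_2^2$ is uniformly continuous; Barbalat's lemma then delivers $e(t)\to 0$ as $t\to\infty$, completing the proof. The delicate points to get right are the uniform boundedness of $\Phi$ and the rigorous averaging estimate that makes the Nussbaum oscillation dominate the cross term.
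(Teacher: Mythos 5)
Your proposal follows essentially the same route as the paper: invert $H$, form the error dynamics, exploit the cancellation $\hat{H}^{-1}u=N(k)e$, integrate with the change of variable $s=k(\tau)$, normalize by $k(t)-k(t_0)$, and rule out $k(t)\to\infty$ by the Nussbaum property \eqref{Nuss}, after which boundedness and monotonicity of $k$ give $e\to 0$. Your closing step is in fact slightly more careful than the paper's (you invoke Barbalat's lemma to pass from $e\in L_2$ to $e\to 0$, and you explicitly flag the boundedness of the lumped mismatch term $\Phi$, which the paper handles implicitly through the term-by-term Young-type inequalities of Appendix~\ref{APP}), but these are refinements of the same argument rather than a different proof.
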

		\begin{proof}[Proof of Theorem~\ref{boundnessproof}]
			Let the system dynamics in Equation~\eqref{Model} and system model in Equation~\eqref{EstModel} be split into the dynamics and the friction counterparts and written as follows
			\begin{equation}
				\label{reDynamicModel}
				\begin{split}
					{\ddot{{q}}}&={H({q})}^{-1} [{{Q}(\dot{{q}})}- {B({q},\dot{{q}})}\dot{{q}} - {G({q})}]	\\
					{\ddot{\hat{q}}}&={H(\hat{q})}^{-1} [{\hat{Q}(\dot{\hat{q}})}- {B(\hat{q},\dot{\hat{q}})}\dot{\hat{q}} - {G(\hat{q})}]+{H(\hat{q})}^{-1}{u}	\\ 	
				\end{split}
			\end{equation}
			where it is understood that all the estimated quantities $ \hat{[\quad ]} $ vary with time, $ {u} $ is as defined in Equation~\eqref{EstModel}, and ${H({q})}^{-1}, {H(\hat{q})}^{-1}$ exist by assumption.
			
			The error dynamics $ \dot{e} ={\ddot{q}}-{\ddot{\hat{q}}} $ can then be expanded and written as
			
			\begin{equation}\label{eDynamics1}
				\begin{split}
					\dot{e}&={H}^{-1}{Q}-\hat{H}^{-1}\hat{{Q}} + \hat{H}^{-1}\hat{{B}}\dot{\hat{q}}-{H}^{-1}{B}\dot{q} \\
					&+ \hat{H}^{-1}\hat{{G}}-{H}^{-1}{G}-\hat{H}^{-1}{u}
				\end{split}
			\end{equation}
			where for ease of use we write  $H\equiv H(q), \hat H \equiv H(\hat q), G\equiv G(q), \hat G \equiv G(\hat q), B\equiv B({q},\dot{{q}}), \hat{B}=B(\hat{q},\dot{\hat{q}}), Q\equiv Q(q)$ and $\hat Q = \hat Q(\hat q)$.
			
			In Equation~\eqref{eDynamics1}, we add and subtract $ \Lambda e $, where $ \Lambda $ is a positive definite design matrix.
			
			\begin{equation}\label{eDynamics2}
				\begin{split}
					\dot{e}&={H}^{-1}{Q}-\hat{H}^{-1}\hat{{Q}} + \hat{H}^{-1}\hat{{B}}\dot{\hat{q}}-{H}^{-1}{B}\dot{q} \\
					&+ \hat{H}^{-1}\hat{{G}}-{H}^{-1}{G}-\hat{H}^{-1}{u} + \Lambda e - \Lambda e\\
					\dot{e}&={H}^{-1}{Q}-\hat{H}^{-1}\hat{{Q}} + \hat{H}^{-1}\hat{{B}}\dot{\hat{q}}-{H}^{-1}{B}\dot{q} \\
					&+ \hat{H}^{-1}\hat{{G}}-{H}^{-1}{G}-\hat{H}^{-1}{u} + \Lambda \dot{q} - \Lambda \dot{\hat{q}} - \Lambda e\\
					\dot{e}&={H}^{-1}{Q}-\hat{H}^{-1}\hat{{Q}} + \hat{H}^{-1}\hat{{G}}-{H}^{-1}{G} - \hat{H}^{-1}{u} - \Lambda e\\
					&- (\Lambda - \hat{H}^{-1}\hat{{B}})\dot{\hat{q}} + (\Lambda - {H}^{-1}{B})\dot{q}  
				\end{split}
			\end{equation}
			
			The following notation simplifications are applied to Equation~\eqref{eDynamics2}.
			\begin{equation}\label{eDynamics3}
				\begin{split}
					\dot{e}&=\underbrace{{H}^{-1}{Q}-\hat{H}^{-1}\hat{{Q}}}_\mathcal{Q} 
					+ \underbrace{\hat{H}^{-1}\hat{{G}}-{H}^{-1}{G}}_\mathcal{G}-\hat{H}^{-1}{u} - \Lambda e\\
					&- \underbrace{(\Lambda - \hat{H}^{-1}\hat{{B}})}_{\mathcal{B}_2}\dot{\hat{q}} + \underbrace{(\Lambda - {H}^{-1}{B})}_{\mathcal{B}_1}\dot{q} \\
					&=\mathcal{Q} + \mathcal{G} + \mathcal{B}_1\dot{q} - \mathcal{B}_2\dot{\hat{q}} -\hat{H}^{-1}{u} - \Lambda e
				\end{split}
			\end{equation}
			
			In Equation~\eqref{eDynamics3}, we add and subtract $ \mathcal{B}_2 \dot{q}$.
			
			\begin{equation}\label{eDynamics4}
				\begin{split}
					\dot{e}&=\mathcal{Q} + \mathcal{G} + \mathcal{B}_1\dot{q} - \mathcal{B}_2 \dot{\hat{q}} -\hat{H}^{-1}{u} - \Lambda e + \mathcal{B}_2 \dot{q} - \mathcal{B}_2 \dot{q}\\
					\dot{e}&=\mathcal{Q} + \mathcal{G} + \mathcal{B}_1\dot{q} - \mathcal{B}_2 \dot{q} + \mathcal{B}_2 (\dot{q} - \dot{\hat{q}})  -\hat{H}^{-1}{u} - \Lambda e\\
					\dot{e}&=\mathcal{Q} + \mathcal{G} + (\mathcal{B}_1-\mathcal{B}_2)\dot{q} + \mathcal{B}_2 e - \Lambda e - \hat{H}^{-1}{u} \\
					\dot{e}&=\mathcal{Q} + \mathcal{G} + (\mathcal{B}_1-\mathcal{B}_2)\dot{q} - (\Lambda - \mathcal{B}_2) e - \hat{H}^{-1}{u} \\
				\end{split}
			\end{equation}
			
			To prove the stability of the error dynamics above, pre-multiply \eqref{eDynamics4} by $ e^T $ to get
			\begin{equation}\label{eDynamics5}
				\begin{split}
					e^T\dot{e}&=e^T\mathcal{Q} + e^T\mathcal{G} + e^T(\mathcal{B}_1-\mathcal{B}_2)\dot{q} - e^T(\Lambda - \mathcal{B}_2) e \\
					&- e^T\hat{H}^{-1}{u} \\
				\end{split}
			\end{equation}
			
			Equation~\eqref{eDynamics5} can be rearranged and rewritten as the following inequality taking into account that terms like $ e^T\mathcal{Q} $ can be used to form $ 2 e^T \mathcal{Q} \leq e^Te+\mathcal{Q}^T\mathcal{Q} $ as explained in Appendix~\ref{APP}. \textbf{Also, The substitution $ u = \hat{H} N(k) e $ is used}.
			\begin{equation}\label{eDynamics6}
				\begin{split}
					e^T\dot{e} + e^T(\Lambda - \mathcal{B}_2) e&\leq 3e^Te+\mathcal{Q}^T\mathcal{Q} + \mathcal{G}^T\mathcal{G} \\
					&  + \dot{q}^T(\mathcal{B}_1-\mathcal{B}_2)^T(\mathcal{B}_1-\mathcal{B}_2)\dot{q}  \\
					&- e^T\hat{H}^{-1}\hat{H} N(k) e \\
				\end{split}
			\end{equation}
			which is further simplified considering $ \dot{k}(t)= \left\lVert e(t) \right\rVert^2_2 =e^T e  $ and that $ N(k) $ is scalar.
			\begin{equation}\label{eDynamics7}
				\begin{split}
					e^T\dot{e} + e^T(\Lambda - \mathcal{B}_2) e&\leq 3\dot{k}+\mathcal{Q}^T\mathcal{Q} + \mathcal{G}^T\mathcal{G} - N(k) \dot{k}\\
					&  + \dot{q}^T(\mathcal{B}_1-\mathcal{B}_2)^T(\mathcal{B}_1-\mathcal{B}_2)\dot{q}  \\
				\end{split}
			\end{equation}
			
			Integrating both sides of the above system yields
			\begin{equation}\label{eDynamics8}
				\begin{split}
					&\int_{t_0}^{t} e^T\dot{e} \;d\tau + \int_{t_0}^{t} e^T(\Lambda - \mathcal{B}_2) e\;d\tau\leq 
					3\int_{t_0}^{t}\dot{k}\;d\tau \\
					&+\int_{t_0}^{t}\mathcal{Q}^T\mathcal{Q}\;d\tau + \int_{t_0}^{t}\mathcal{G}^T\mathcal{G}\;d\tau - \int_{t_0}^{t}N(k) \dot{k}\;d\tau\\
					&  + \int_{t_0}^{t}\dot{q}^T(\mathcal{B}_1-\mathcal{B}_2)^T(\mathcal{B}_1-\mathcal{B}_2)\dot{q}\;d\tau  \\
				\end{split}
			\end{equation}
			Applying a change in variable in the $ \int_{t_0}^{t}N(k) \dot{k}\;d\tau $ term, we get
			\begin{equation}\label{eDynamics9}
				\begin{split}
					&\frac{1}{2}e^Te + \int_{t_0}^{t} e^T(\Lambda - \mathcal{B}_2) e\;d\tau\leq 
					3(k(t)-k(t_0)) \\
					&+\int_{t_0}^{t}\mathcal{Q}^T\mathcal{Q}\;d\tau + \int_{t_0}^{t}\mathcal{G}^T\mathcal{G}\;d\tau - \int_{k_0}^{k}N(k)\;dk\\
					&  + \int_{t_0}^{t}\dot{q}^T(\mathcal{B}_1-\mathcal{B}_2)^T(\mathcal{B}_1-\mathcal{B}_2)\dot{q}\;d\tau  \\
				\end{split}
			\end{equation}
			Dividing both sides by $ k(t)-k(t_0) $ and rearranging yields
			\begin{equation}\label{eDynamics10}
				\begin{split}
					\frac{1}{2}\frac{e^Te}{k(t)-k(t_0)} &+ \frac{\int_{t_0}^{t} e^T(\Lambda - \mathcal{B}_2) e\;d\tau}{k(t)-k(t_0)}		\leq 	3	 \\
					&-\frac{1}{k(t)-k(t_0)}\int_{k_0}^{k}N(k)\;dk\\ &+\frac{\int_{t_0}^{t}\mathcal{Q}^T\mathcal{Q}\;d\tau}{k(t)-k(t_0)} 
					+ \frac{\int_{t_0}^{t}\mathcal{G}^T\mathcal{G}\;d\tau}{k(t)-k(t_0)} \\
					&+\frac{\int_{t_0}^{t}\dot{q}^T(\mathcal{B}_1-\mathcal{B}_2)^T (\mathcal{B}_1-\mathcal{B}_2)\dot{q}\;d\tau}{k(t)-k(t_0)}  \\
				\end{split}
			\end{equation}
			Dividing both sides by the terms on the R.H.S. except for $ \frac{1}{k(t)-k(t_0)} \int_{k_0}^{k}N(k)\;dk $ yields
			\begin{equation}\label{eDynamics11}
				\begin{split}
					&\begin{pmatrix}\frac{1}{2}\frac{e^Te}{k(t)-k(t_0)} +\\ \\ \frac{\int_{t_0}^{t} e^T(\Lambda - \mathcal{B}_2) e\;d\tau}{k(t)-k(t_0)}\end{pmatrix}	\begin{pmatrix}3+\frac{\int_{t_0}^{t}\mathcal{Q}^T\mathcal{Q}\;d\tau}{k(t)-k(t_0)} 
						+ \frac{\int_{t_0}^{t}\mathcal{G}^T\mathcal{G}\;d\tau}{k(t)-k(t_0)} \\ \\
						+\frac{\int_{t_0}^{t}\dot{q}^T(\mathcal{B}_1-\mathcal{B}_2)^T (\mathcal{B}_1-\mathcal{B}_2)\dot{q}\;d\tau}{k(t)-k(t_0)}] \end{pmatrix}^{-1}	
					\leq 		 \\
					&1-\frac{\int_{k_0}^{k}N(k)\;dk}{k(t)-k(t_0)}
					\begin{pmatrix}3+\frac{\int_{t_0}^{t}\mathcal{Q}^T\mathcal{Q}\;d\tau}{k(t)-k(t_0)} 
						+ \frac{\int_{t_0}^{t}\mathcal{G}^T\mathcal{G}\;d\tau}{k(t)-k(t_0)} \\ \\
						+\frac{\int_{t_0}^{t}\dot{q}^T(\mathcal{B}_1-\mathcal{B}_2)^T (\mathcal{B}_1-\mathcal{B}_2)\dot{q}\;d\tau}{k(t)-k(t_0)}] \end{pmatrix}^{-1} \\
				\end{split}
			\end{equation}
			
			It can be noted from the above that with the exception of the $ \frac{1}{k(t)-k(t_0)} \int_{k_0}^{k}N(k)\;dk $ term on the R.H.S. of Equation~\eqref{eDynamics10}, all the terms are quadratic forms and positive. Now if the term $ k(t) $ is monotonically increasing, i.e. $ k(t) \to \infty $ as $ t \to \infty $, then the Nussbaum function term on the R.H.S. can, as a whole, take values approaching $ \pm \infty $ by the definition of a Nussbaum function as presented in Equation~\eqref{Nuss}. This violates the inequality above. Consequently, the assumption that $ k(t) \to \infty $ as $ t \to \infty $ is false and $ k(t) $ must, therefore, be bounded. However, $ \dot{k}(t) $ being a non-decreasing function by definition in \eqref{kd} and $ k(t) $ being bounded implies that $ k(t) \to k_{\infty} $ as $ t \to \infty $. This means that $ \dot{k}(t) \to 0 $ as $ t \to \infty $ i.e. $ e^{T}e \to 0 $, or $ e \to 0 $. This establishes the required result.
		\end{proof}
		
		With the estimation error going to zero established in Theorem~\ref{boundnessproof}, the convergence of the estimated parameters to the real parameters is now shown in Theorem~\ref{convergeproof}. It is also worth noting that verifying the assumptions that $H(q(t))$ and $H(\hat{q}(t))$ are invertible is simple and can be explicitly checked by numerical computation. Further, before proceeding to the next result, some notation is defined, to facilitate the a condensed representation of the next result and its proof. Thus motivated let $i\in\{0,1\}$, and $j,k\in\{1,2,\cdots,9,\ell\}$. Here we use the set $\{1,2,\cdots,9,\ell\}$ and let $\ell=10$ so that in the statement of Theorem \ref{convergeproof} the subscript value of 10 can be easily visualized. Now consider the following definitions.
		\begin{equation}\label{Paraproof21}	a_i(t)=F_{n_i} \mu_{d_i} tanh \left( 4\;\frac{\dot{\theta}_i(t)}{\dot{\theta}_{t_i}(t)}\right)\\\end{equation}
		\begin{equation}\label{Paraproof22}	\hat{a}_{ijk}(t)=F_{n_i} \hat{z}_j(t) tanh \left( 4\;\frac{\dot{{\theta}}_i(t)}{\hat{z}_k(t)}\right)\end{equation}
		\begin{equation}\label{Paraproof23}	b_i(t)=\mu_{v_i} \dot{\theta}_i(t) tanh \left( 4\;\frac{F_{n_i}}{F_{nt_i}}\right)\\\end{equation}
		\begin{equation}\label{Paraproof24}	\hat{b}_{ijk}(t)=\hat{z}_j(t) \dot{{\theta}}_i(t) tanh \left( 4\;\frac{F_{n_i}}{\hat{z}_k(t)}\right)\\\end{equation}
		\begin{equation}\label{Paraproof25}	c_i(t)=\cfrac{F_{n_i}  \mu_{s_i}\; \cfrac{\dot{\theta}_i(t)}{\dot{\theta}_{t_i}(t)} }{ \cfrac{1}{4} \left(\cfrac{\dot{\theta}_i(t)}{\dot{\theta}_{t_i}(t)} \right)^2 + \cfrac{3}{4} }\\\end{equation}
		\begin{equation}\label{Paraproof26}	\hat{c}_{ijk}(t)=\cfrac{F_{n_i} \hat{z}_j(t)\; \cfrac{\dot{{\theta}}_i(t)}{\hat{z}_k(t)} }{ \cfrac{1}{4} \left(\cfrac{\dot{{\theta}}_i(t)}{\hat{z}_k(t)} \right)^2 + \cfrac{3}{4} } \\\end{equation}
		\begin{equation}\label{Paraproof28}	d_i(t)=	\cfrac{F_{n_i}  \mu_{d_i}\; \cfrac{\dot{\theta}_i(t)}{\dot{\theta}_{t_i}(t)} }{ \cfrac{1}{4} \left(\cfrac{\dot{\theta}_i(t)}{\dot{\theta}_{t_i}(t)} \right)^2 + \cfrac{3}{4} }\\\end{equation}
		\begin{equation}\label{Paraproof27}	\hat{d}_{ijk}(t)=\cfrac{F_{n_i}  \hat{z}_j(t)\; \cfrac{\dot{{\theta}}_i(t)}{\hat{z}_k(t)} }{ \cfrac{1}{4} \left(\cfrac{\dot{{\theta}}_i(t)}{\hat{z}_k(t)} \right)^2 + \cfrac{3}{4} }\end{equation}
		\begin{theorem}\label{convergeproof}
			Considering the definitions in \eqref{Paraproof21}-\eqref{Paraproof27} let $$A(t)=\begin{bmatrix}
				{a_0-\hat{a}_{014}}&{b_0-\hat{b}_{035}}&{c_0-\hat{c}_{024}}&{d_0-\hat{d}_{014}}\\
				{a_1-\hat{a}_{169}}&{b_1-\hat{b}_{18\ell}}&{c_1-\hat{c}_{179}}&{d_1-\hat{d}_{169}}
			\end{bmatrix},$$ and $x=\begin{bmatrix}
				{1}&{1}&{1}&{1}
			\end{bmatrix}^T$. For the parameter identification problem described in Section~\ref{Sec:paraID}, suppose the conditions required for Theorem 1 to hold are satisfied. Suppose there exists a time instant $t^*>t_0$ such that for $t\to\infty, t>t^*$, $x$ does not belong to the nullspace of $A(t), A(t)\neq 0$,and all the following functions: $tanh \left( 4\;\frac{\dot{\theta}_i(t)}{\dot{\theta}_{t_i}(t)}\right)$, $tanh \left( 4\;\frac{\dot{{\theta}}_i(t)}{\hat{z}_k(t)}\right)$, $tanh \left( 4\;\frac{F_{n_i}}{F_{nt_i}}\right)$, $tanh \left( 4\;\frac{F_{n_i}}{\hat{z}_k(t)}\right)$ tend to 1. \textbf{Then} the parameter adaptation law in Equation~\eqref{adapt} leads to $ \hat{z}_n \to z_{n_{true}} $ as $ t\to \infty $ for $ n \in \{1,2, \dots, 10\} $.
		\end{theorem}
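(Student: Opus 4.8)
The plan is to stand on the two preceding results and let the friction-force mismatch do the work. Theorem~\ref{boundnessproof} supplies $e(t)\to 0$, and Lemma~\ref{adaptiveproof} supplies that every $\hat{z}_n(t)$ tends to a finite limit $\hat{z}_{n_\infty}$. What remains is to identify those limits with the true parameters, and the mechanism I would use is: force the plant/model friction vector $Q-\hat Q$ to zero through the error dynamics, then invoke the non-degeneracy of $A(t)$ to turn force matching into parameter matching.

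First I would pin down the limiting behaviour of the auxiliary signals. From the proof of Theorem~\ref{boundnessproof} we have $k(t)\to k_\infty<\infty$, so $N(k(t))$ is bounded; since $e\to0$ and $H(\hat q)$ is bounded along the settling trajectory, the input $u=H(\hat q)N(k)e$ of \eqref{unk} satisfies $u\to0$. Moreover $\dot k=\|e\|_2^2$ from \eqref{kd} and $k(t)\to k_\infty$ give $\int_{t_0}^\infty\|e\|_2^2\,d\tau=k_\infty-k_0<\infty$, so that, provided $\dot e$ is uniformly continuous (its constituent signals being bounded with bounded derivatives), Barbalat's lemma upgrades $e\to0$ to $\dot e\to0$. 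It also gives $\dot{\hat q}\to\dot q$, which is exactly why the building blocks \eqref{Paraproof21}--\eqref{Paraproof27} are written with the true velocity $\dot\theta_i$: in the limit the velocity arguments of the plant and model friction terms coincide, and the only surviving discrepancy is in the parameters.

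Second I would pass to the limit in the error dynamics \eqref{eDynamics1}. Taking $\hat q\to q$ and $\dot{\hat q}\to\dot q$, the inertia, Coriolis and gravity mismatches $\hat H^{-1}\hat B\dot{\hat q}-H^{-1}B\dot q$ and $\hat H^{-1}\hat G-H^{-1}G$ vanish by continuity, and $\hat H^{-1}u\to0$; since $\dot e\to0$, what survives is $H^{-1}Q-\hat H^{-1}\hat Q\to0$. Using $\hat H\to H$ and invertibility of $H$ (the standing assumption inherited from Theorem~\ref{boundnessproof}), this collapses to $Q-\hat Q\to0$, i.e. $f_i(\dot\theta_i)-\hat f_i(\dot\theta_i)\to0$ for $i=0,1$. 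Splitting each Stribeck term into its static part $c_i$ and dynamic part $d_i$ and collecting the four block-wise differences, these two scalar mismatches are precisely the entries of $A(t)x$, so I will have established $A(t)x\to0$.

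Finally I would convert force convergence into parameter convergence, by contradiction. Each $\hat z_n$ is already constant in the limit, so each entry of $A(t)$ tends to a block-difference whose only source of discrepancy is the gap $\hat z_{n_\infty}-z_{n,\mathrm{true}}$; the hypotheses that the four $\tanh$ factors tend to $1$ strip away the saturating nonlinearities and leave those blocks proportional to the parameter gaps times non-vanishing functions of $\dot\theta_i$ and $F_{n_i}$. If some $\hat z_{n_\infty}\neq z_{n,\mathrm{true}}$, then $A(t)$ has a nonzero limit and, because $x$ does not lie in the null space of $A(t)$ and $A(t)\neq0$ for $t>t^*$, the combination $A(t)x$ would be bounded away from $0$, contradicting $A(t)x\to0$. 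Hence every gap vanishes and $\hat z_n\to z_{n,\mathrm{true}}$. I expect this last step to be the main obstacle: reducing two scalar force-matching conditions to the separate convergence of all ten parameters is genuinely a persistent-excitation/identifiability question, and it is exactly the non-degeneracy of $A(t)$ together with the $\tanh\to1$ hypotheses that must supply the injectivity needed to decouple parameters that otherwise enter only inside saturated nonlinearities; a secondary technical point is justifying the limit passage itself, namely $\dot e\to0$ via Barbalat and $\hat q\to q$ in position rather than merely in velocity.
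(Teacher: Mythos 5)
Your overall skeleton matches the paper's: invoke Theorem~\ref{boundnessproof} to get $e\to 0$, $k\to k_\infty$, $u\to 0$; pass to the limit in the dynamics to obtain $Q(\dot q)-\hat Q(\dot q)\to 0$; recognize the two scalar friction mismatches as $A(t)x\to 0$; and use the nullspace hypothesis to force $A(t)$ to vanish entry-wise. Your treatment of the limit passage is in fact more careful than the paper's (Barbalat's lemma for $\dot e\to 0$, and you correctly flag that $e$ is a \emph{velocity} error, so $\hat q\to q$ in position needs separate justification, whereas the paper simply substitutes $\hat q = q$).

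The genuine gap is in your final step. You argue by contradiction that if any $\hat z_{n_\infty}\neq z_{n,\mathrm{true}}$ then $A(t)$ has a nonzero limit, so $A(t)x$ stays bounded away from zero. This fails for the four transition parameters $z_4=\dot\theta_{t_0}$, $z_5=F_{nt_0}$, $z_9=\dot\theta_{t_1}$, $z_{10}=F_{nt_1}$: under your own hypothesis that all the $\tanh(\cdot)$ factors tend to $1$, these parameters enter $A(t)$ only through the arguments of saturated nonlinearities and through the Stribeck denominators, so the entries of $A(t)$ become insensitive to them in the limit --- for instance $b_0-\hat b_{035}\to(\mu_{v_0}-\hat z_3)\dot\theta_0$ carries no trace of $\hat z_5$ once both $\tanh$ terms equal $1$. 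Hence $A(t)\to 0$ pins down only the six coefficient parameters $\hat z_1,\hat z_2,\hat z_3,\hat z_6,\hat z_7,\hat z_8$, and your contradiction never fires for the remaining four; the step ``every gap vanishes'' is asserted, not proved. The paper closes this by a second pass that your proposal lacks: once $\hat z_1=\mu_{d_0}$ is established, the exact entry-wise identity $\hat a_{014}=a_0$ reduces to $\tanh\left(4\,\dot\theta_0/\hat z_4\right)=\tanh\left(4\,\dot\theta_0/\dot\theta_{t_0}\right)$, and injectivity of $\tanh$ lets one equate the arguments to get $\hat z_4=\dot\theta_{t_0}$ (likewise $\hat z_5,\hat z_9,\hat z_{10}$), after which $\hat z_2=\mu_{s_0}$ follows from the Stribeck entry using $\hat z_4=\dot\theta_{t_0}$. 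You explicitly name this decoupling as ``the main obstacle'' but do not resolve it, so as written the proof does not establish convergence of all ten parameters.
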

		
		\begin{proof}[Proof of Theorem~\ref{convergeproof}]
			By the assumptions Theorem~\ref{boundnessproof} is satisfied, this gives $ e\to 0 $ as $ t\to \infty $. As a result $ k(t)\to k_\infty $ (a constant), which further leads $ {u}\to 0 $ following the definition of $u$ in \eqref{Nk}-\eqref{unk}. As $ e\to 0 $, therefore by definition of $e$ in \eqref{e} we get that ${\dot{\hat{q}}} \to {\dot{q}}$, and $ {\ddot{\hat{q}}} \to {\ddot{q}} $. 
			
			Considering this, the model of the system in Equation~\eqref{Model} and the estimator dynamics in Equation~\eqref{EstModel} can be written as
			\begin{equation}\label{Paraproof}
				\begin{split}
					{H(q)}{\ddot{{q}}}+ {B({q},\dot{{q}})}\dot{q} + {G({q})}&={{Q}(\dot{{q}})}	\\
					{H(\hat{q})} {\ddot{\hat{q}}} + {B(\hat{q},\dot{\hat{q}})\dot{\hat{q}}} + {G(\hat{q})}	&=	{\hat{Q}(\dot{\hat{q}})}	+  {u}.\\
				\end{split}
			\end{equation} 
			Because $ u \to 0 $, $ {\dot{\hat{q}}} \to {\dot{q}} $, $ {\ddot{\hat{q}}} \to {\ddot{q}} $, subtracting the lower equation in \eqref{Paraproof} from the upper one, and substituting $ \hat{q} = q $ provides 
			\begin{equation}\label{Paraproof0}
				{Q}(\dot{{q}}) - \hat{Q}(\dot{{q}})	= \begin{bmatrix}	{f}_0(\dot{{\theta}}_0) \\ {f}_1(\dot{{\theta}}_1) \end{bmatrix} - 
				\begin{bmatrix}	\hat{f}_0(\dot{{\theta}}_0) \\ \hat{f}_1(\dot{{\theta}}_1) \end{bmatrix} =0.
			\end{equation}
			Now using the definitions of the friction force model in \eqref{Qf}, and the concerned estimates as defined in \eqref{EstQf1}, \eqref{EstQf2} one can rewrite \eqref{Paraproof0} as \eqref{Paraproof1} using the notation in \eqref{Paraproof21}-\eqref{Paraproof27}.
			\begin{equation}\label{Paraproof1}
				\underbrace{\begin{bmatrix}
						{a_0-\hat{a}_{014}}&{b_0-\hat{b}_{035}}&{c_0-\hat{c}_{024}}&{d_0-\hat{d}_{014}}\\
						{a_1-\hat{a}_{169}}&{b_1-\hat{b}_{18\ell}}&{c_1-\hat{c}_{179}}&{d_1-\hat{d}_{169}}
				\end{bmatrix}}_{A(t)}	\underbrace{\begin{bmatrix} 1 \\1\\1 \\ 1 \end{bmatrix}}_{x}=0
			\end{equation}
			We have arrived at \eqref{Paraproof1} because as $t\to\infty$ we have $e\to 0$. Recall that, by the assumptions of the theorem, there exists a time instant $t^*>t_0$ such that for $t\to\infty, t>t^*$, $x$ does not belong to the nullspace of $A(t), A(t)\neq 0$.  So for such $t>t^*$ from \eqref{Paraproof1} we can write ${\hat{a}_{014}=a_0}$, ${\hat{b}_{035}=b_0}$, ${\hat{c}_{024}=c_0}$, ${\hat{d}_{014}=d_0}$,
			${\hat{a}_{169}=a_1}$, ${\hat{b}_{18\ell}=b_1}$, ${\hat{c}_{179}=c_1}$, ${d_1=\hat{d}_{169}}$.
			Now considering ${\hat{a}_{014}=a_0}$, ${\hat{b}_{035}=b_0}$, ${\hat{c}_{024}=c_0}$, ${\hat{d}_{014}=d_0}$ along with their definitions in \eqref{Paraproof21}-\eqref{Paraproof27}, the following are obtained.
			\begin{equation}\label{Param1}
				F_{n_0} \hat{z}_1 tanh \left( 4\;\frac{\dot{{\theta}}_0}{\hat{z}_4}\right) 
				=
				F_{n_0} \mu_{d_0} tanh \left( 4\;\frac{\dot{\theta}_0}{\dot{\theta}_{t_0}}\right)
			\end{equation}
			\begin{equation}\label{Param2}
				\hat{z}_3 \dot{{\theta}}_0 tanh \left( 4\;\frac{F_{n_0}}{\hat{z}_5}\right)
				=
				\mu_{v_0} \dot{\theta}_0 tanh \left( 4\;\frac{F_{n_0}}{F_{nt_0}}\right)
			\end{equation}
			\begin{equation}\label{Param3}
				\cfrac{F_{n_0} \hat{z}_2\; \cfrac{\dot{{\theta}}_0}{\hat{z}_4} }{ \cfrac{1}{4} \left(\cfrac{\dot{{\theta}}_0}{\hat{z}_4} \right)^2 + \cfrac{3}{4} } 
				=
				\cfrac{F_{n_0}  \mu_{s_0}\; \cfrac{\dot{\theta}_0}{\dot{\theta}_{t_0}} }{ \cfrac{1}{4} \left(\cfrac{\dot{\theta}_0}{\dot{\theta}_{t_0}} \right)^2 + \cfrac{3}{4} }\
			\end{equation}
			
			
			As per the assumptions of the theorem, all the $tanh(\cdot)$ terms tend to unity. Considering the equations \eqref{Param1} and Equation~\eqref{Param2} with this gives $\hat{z}_1 	=	\mu_{d_0} $, $\hat{z}_3 	=	\mu_{v_0}$.
			Now considering \eqref{Param1} again with $\hat{z}_1 	=	\mu_{d_0}$ and explcitly writing the $tanh(\cdot)$ terms gives
			\begin{equation}\label{tanz4}
				tanh \left( 4\;\frac{\dot{{\theta}}_0}{\hat{z}_4}\right) = tanh \left( 4\;\frac{\dot{\theta}_0}{\dot{\theta}_{t_0}}\right).
			\end{equation}
			Now from the definition of the hyperbolic tangent function, and from its graph, it is known that its inverse can be found, i.e. so let $p=tanh \left( 4\;\frac{\dot{{\theta}}_0}{\hat{z}_4}\right) = tanh \left( 4\;\frac{\dot{\theta}_0}{\dot{\theta}_{t_0}}\right)$. Then, $tanh^{-1}(p)=  4\;\frac{\dot{{\theta}}_0}{\hat{z}_4} =   4\;\frac{\dot{\theta}_0}{\dot{\theta}_{t_0}}$, leading to $ \hat{z}_4 = \dot{\theta}_{t_0} $. Now using this exact same process with \eqref{Param2}, considering $\hat{z}_3 	=	\mu_{v_0}$ and inverting the $\tanh(\cdot)$ terms provides $  \hat{z}_5 = F_{nt_0} $. Further using the fact that $\hat{z}_4 = \dot{\theta}_{t_0} $ and considering \eqref{Param3} gives us $\hat{z}_2	=	\mu_{s_0}$.
			
			
			
			
			The exact same procedure as followed from after \eqref{Paraproof1} to the line above, can be used to explicitly analyze ${\hat{a}_{169}=a_1}$, ${\hat{b}_{18\ell}=b_1}$, ${\hat{c}_{179}=c_1}$, ${d_1=\hat{d}_{169}}$. Using this process provides the  following to $ z_6 = \mu_{d_1} $, $ z_7 = \mu_{s_1} $, $ z_8 = \mu_{v_1} $, $ z_9 = \dot{\theta}_{t_1} $, $ z_{10} = F_{nt_1} $. This completes the proof.
		\end{proof}
		It is worth mentioning that the assumptions of the above theorem are not necessarily restrictive. This is because during experimentation it has been routinely observed that all the $tanh(\cdot)$ terms frequently approach unity. So it is easy to pick time intervals at which $e\approx 0$, and $tanh(\cdot)\approx 1$ so that the the values of $\hat{z}_i(t),i\in\{1,2,\cdots,10\}$ in such an interval of time can be used as candidates for parameters estimates. Also, once such estimated parameters values are available, they can be fed into \eqref{Paraproof21}-\eqref{Paraproof27} to reconstruct matrix $A$ mentioned in Theorem \ref{convergeproof}. This will easily allow ascertaining if the nullspace related assumptions in Theorem \ref{convergeproof} are met. If they are not met then estimates from another interval in time where $e\approx 0$, and $tanh(\cdot)\approx 1$, can be checked for satisfaction of the nullspace assumption.
		
		\begin{table*}
			\centering
			\caption{Friction model parameter identification problem setup.}
			\label{table:ID0}
			\begin{tabular}{lccccc}
				\toprule
				{Parameter}&{$\mu_{d_0}$}	& {$\mu_{s_0}$}	& {$\mu_{v_0}$}  & {$\dot{\theta}_{t_0}$} &  {$ F_{nt_0} $} \\
				\midrule
				$ Z_{nl} $&$2.22\times 10^{-16}$&$2.22\times 10^{-16}$&$2.22\times 10^{-16}$&$2.22\times 10^{-16}$&$2.22\times 10^{-16}$\\
				\midrule
				$ Z_{nu} $ &$0.0750$&$0.0750$&$0.0100$&$0.0100$&$0.1$\\
				\midrule
				$ \lambda_l $	&$50$&$50$&$50$&$50$&$50$\\
				\midrule
				$ \lambda_u $	&$1$&$1$&$1$&$1$&$1$\\
				\midrule[1pt]
				{Parameter}&{$\mu_{d_1}$}	& {$\mu_{s_1}$}	& {$\mu_{v_1}$}  & {$\dot{\theta}_{t_1}$} &  {$ F_{nt_1} $}\\
				\midrule
				$ Z_{nl} $&$2.22\times 10^{-16}$&$2.22\times 10^{-16}$&$2.22\times 10^{-16}$&$2.22\times 10^{-16}$&$2.22\times 10{-16}$\\
				\midrule
				$ Z_{nu} $ &$0.150$&$0.151$&$0.0100$&$0.0100$&$0.100$\\
				\midrule
				$ \lambda_l $	&$50$&$50$&$50$&$50$&$50$\\
				\midrule
				$ \lambda_u $	&$1$&$1$&$1$&$1$&$1$\\
				\bottomrule
			\end{tabular}
		\end{table*}
		
		\section{Results and Discussion}
		
		In this section, the proposed parameter identification algorithm is tested in simulation, then experimental validation is performed. In all experiments, the links start with a zero initial angular velocity and a specified initial angular position. The motion is monitored until the system returns to the equilibrium position. 
		
		\subsection{Simulation}
		Here, the performance of the proposed algorithm is evaluated in simulation. The dynamic equations of motion for the Furuta pendulum were simulated from some initial conditions, and the angular position and velocity states are fed into the algorithm. The initial conditions were $ \theta_0(0) = 0^\circ $, $ \theta_1(0) = 120^\circ $, $ \dot{\theta}_0(0) = 0^\circ/sec $, and $ \dot{\theta}_1(0) = 0^\circ/sec $. The Runge-Kutta method of order $ 4 $ with a sampling time of $ 0.001 s $ was used to propagate the equations in time and solve the ODEs. Normally distributed Gaussian random white noise characterized by $ v \sim \mathcal{N}(0,\,\sigma=0.1)\, $ was injected into the simulated states to serve as measurements noise. Results achieved despite this, show the resilience of the proposed approach to noise. The initial conditions fed into the algorithm were also injected with similar noise to account for the difficulty associated with acquiring highly accurate and precise initial conditions. Additionally, this serves to simulate the ramifications of initial conditions mismatch in chaotic systems such as the Furuta pendulum. The adaptation parameters used to test the algorithm are detailed in Table~\ref{table:ID0}. The upper/lower parameters limits in Table~\ref{table:ID0} were used to form a normal distribution centered around the mean of each respective parameter bounds. The initial friction model parameters were sampled from this distribution to start the algorithm. The actual parameters and the initial guesses used to simulate the response are presented in Table~\ref{table:IDRSim}.
		
		\begin{figure*}
			\centering
			\includegraphics[width=16 cm]{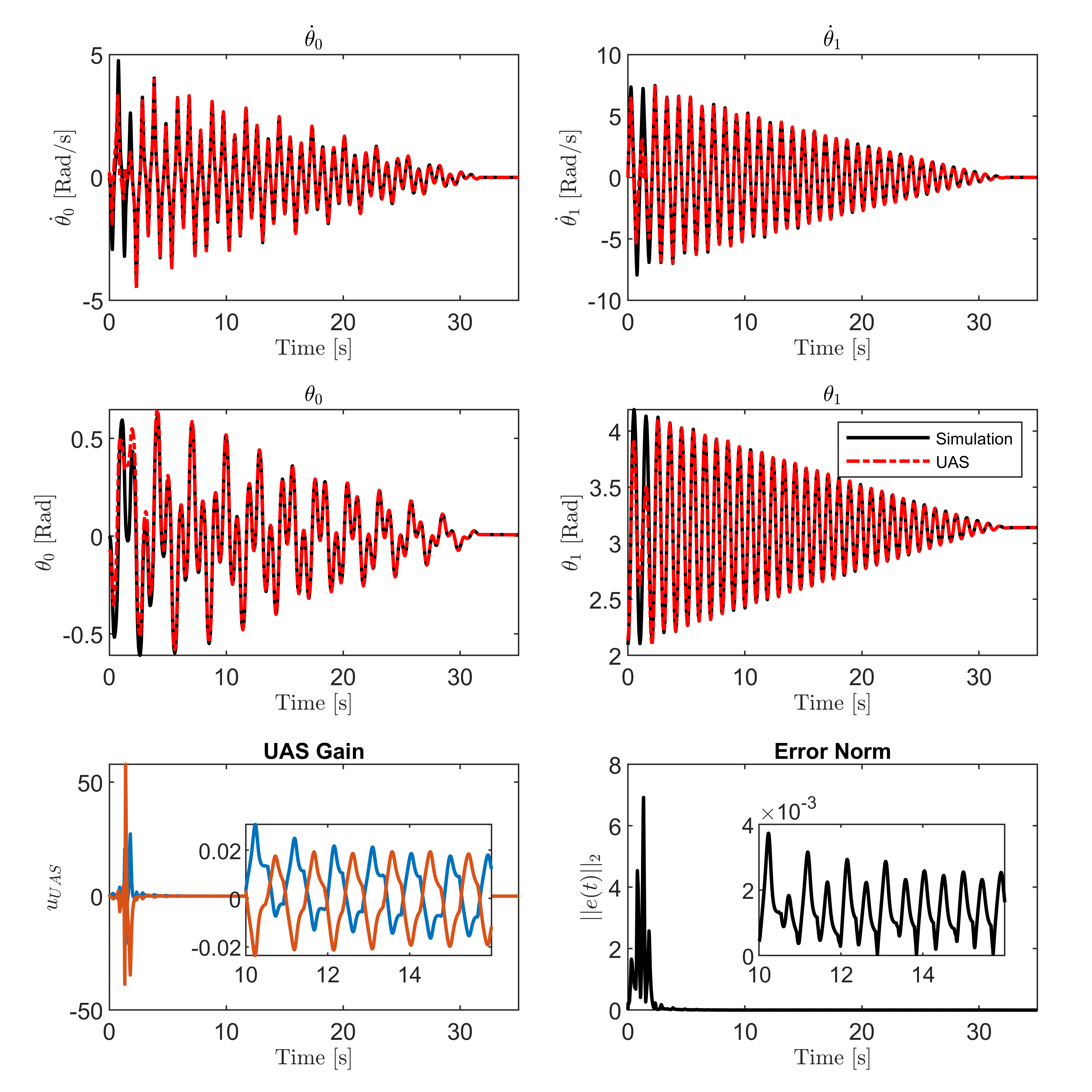}
			\caption{Performance of the UAS observer during the simulation.\label{Sim_UAS}}
		\end{figure*}
		\begin{figure*}
			\centering
			\includegraphics[width=16 cm]{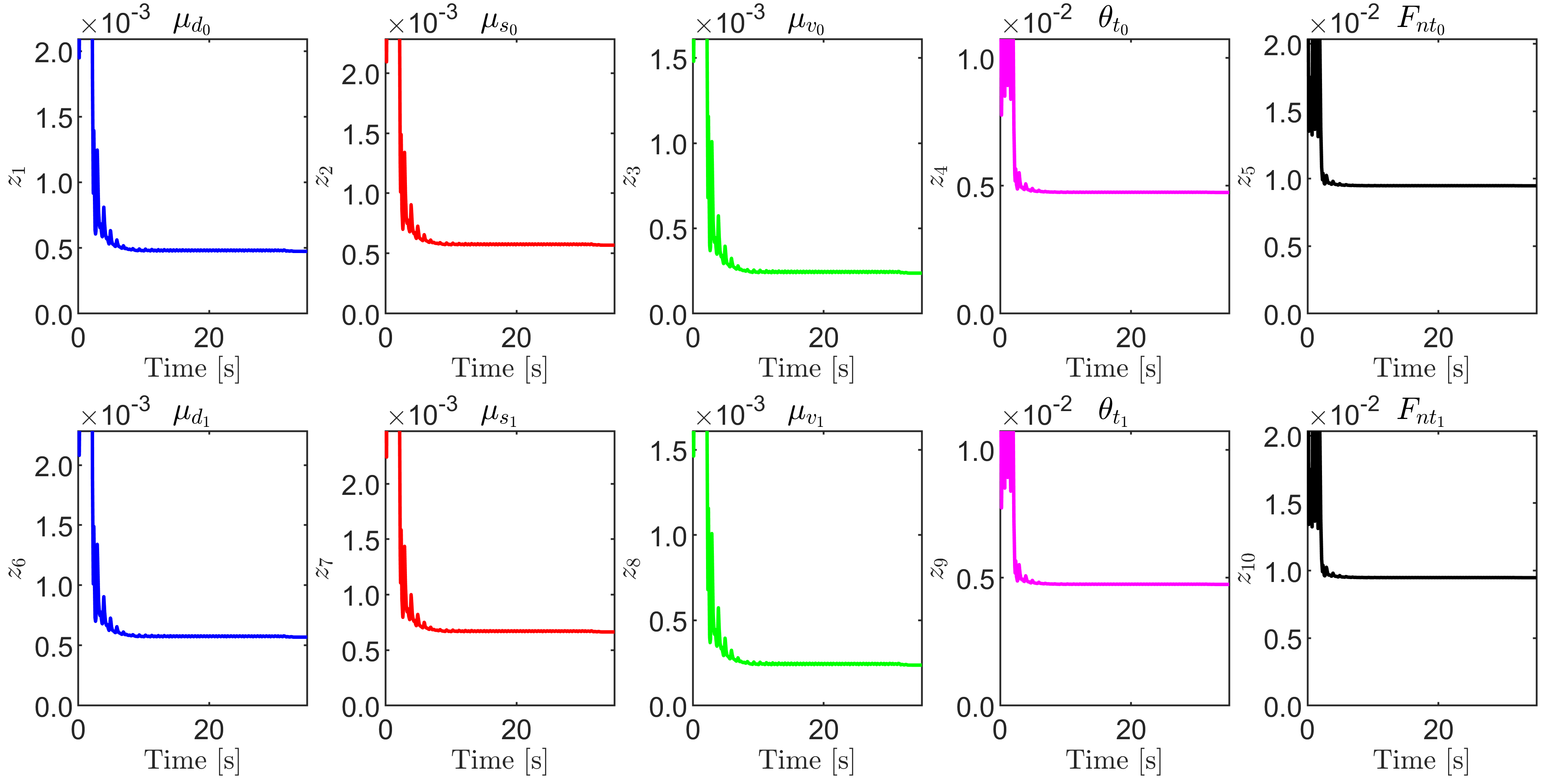}
			\caption{Parameter trajectories during the simulation.\label{Sim_Para}}
		\end{figure*}
		
		\begin{figure*}
			\centering
			\includegraphics[width=14 cm]{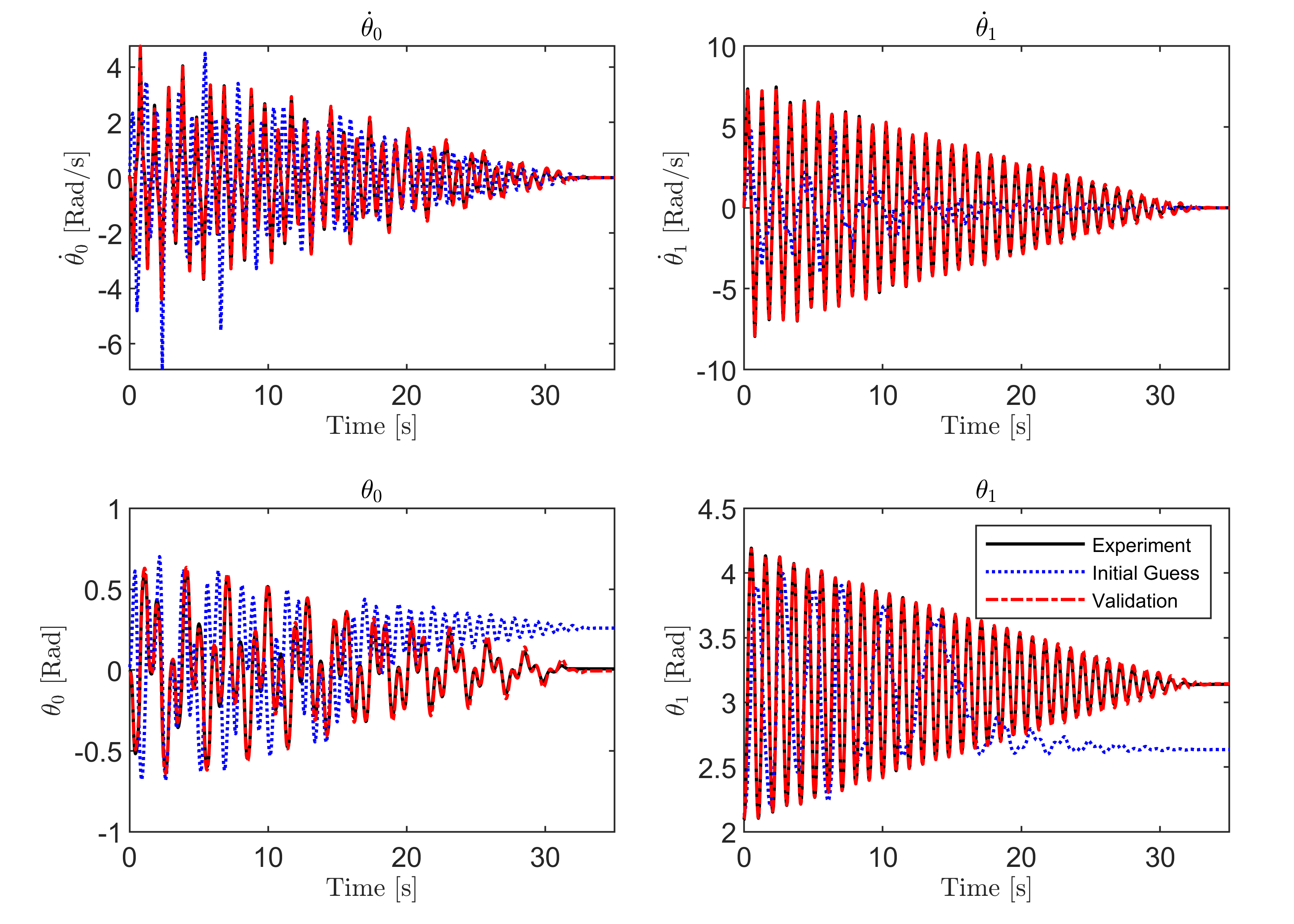}
			\caption{Validation of the identified parameters during the simulation.\label{Sim_Val}}
		\end{figure*} 
		
		\begin{figure*}
			\centering
			\includegraphics[width=14 cm]{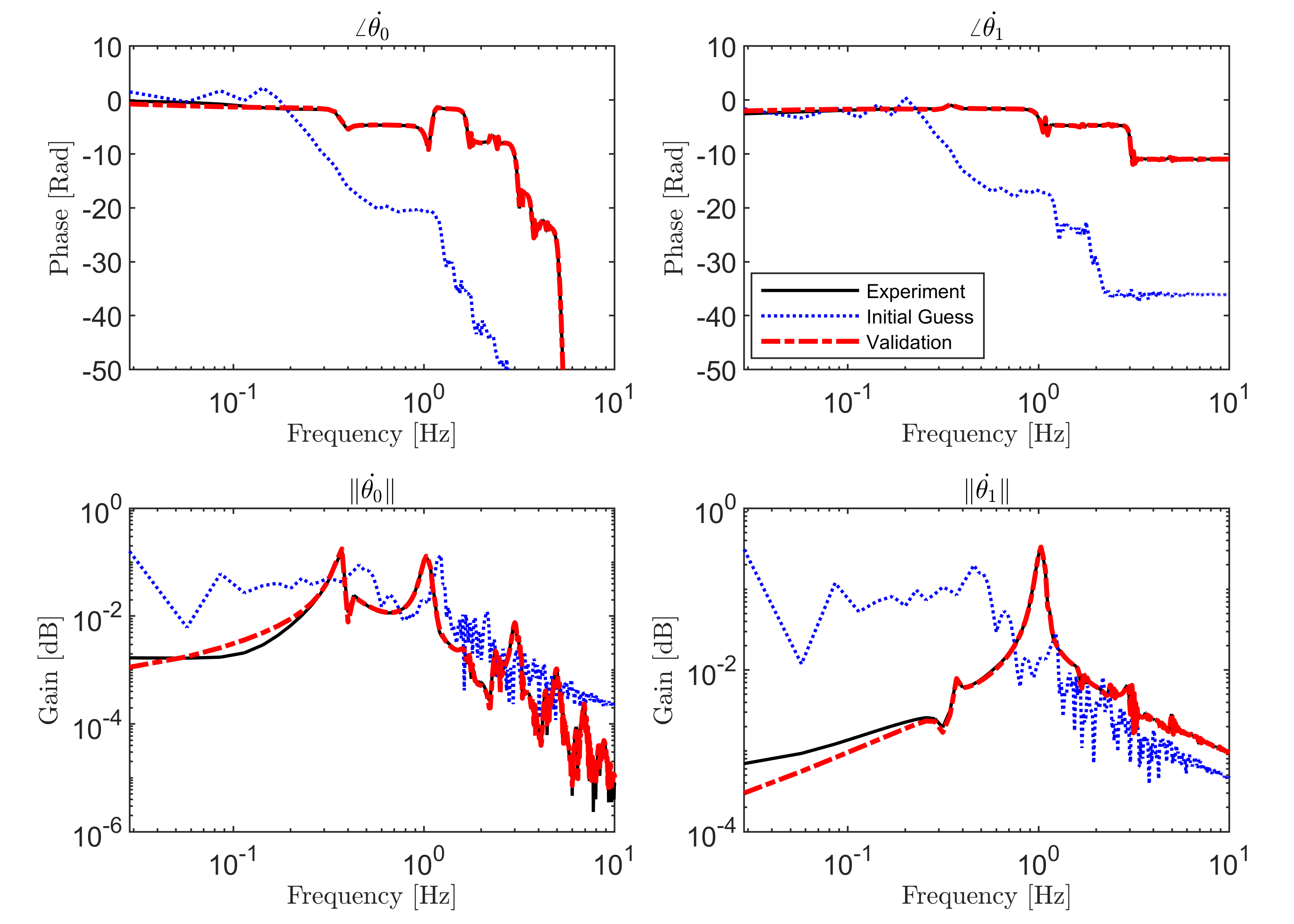}
			\caption{Frequency portrait of the identified parameters from the simulation.\label{Sim_Val_Frq}}
		\end{figure*} 
		\begin{table*}
			\centering
			\caption{Friction model parameter simulation identification results with the UAS method.\label{table:IDRSim}}
			\begin{tabular}{lccccc}
				\toprule
				{Parameter}&{$\mu_{d_0}$}	& {$\mu_{s_0}$}	& {$\mu_{v_0}$}  & {$\dot{\theta}_{t_0}$} &  {$ F_{nt_0} $} \\
				\midrule
				{Initial Guess}&$5.135\times 10^{-3}$&$5.875\times 10^{-3}$&$2.531\times 10^{-3}$&$4.853\times 10^{-2}$&$1.029\times 10^{-1}$\\
				\midrule
				{Estimate}&$4.793\times 10^{-4}$&$5.740\times 10^{-4}$&$2.424\times 10^{-4}$&$4.742\times 10^{-3}$&$9.479\times 10^{-3}$\\
				\midrule
				{Actual Value}&$5\times 10^{-4}$&$6\times 10^{-4}$&$2.5\times 10^{-4}$&$5\times 10^{-3}$&$10\times 10^{-3}$\\
				\midrule
				{Parameter}&{$\mu_{d_1}$}	& {$\mu_{s_1}$}	& {$\mu_{v_1}$}  & {$\dot{\theta}_{t_1}$} &  {$ F_{nt_1} $} \\
				\midrule
				{Initial Guess}&$5.705\times 10^{-3}$&$6.683\times 10^{-3}$&$2.399\times 10^{-3}$&$4.820\times 10^{-2}$&$9.720\times 10^{-2}$\\
				\midrule
				{Estimate}&$5.740\times 10^{-4}$&$6.688\times 10^{-4}$&$2.424\times 10^{-4}$&$4.742\times 10^{-3}$&$9.479\times 10^{-3}$\\
				\midrule
				{Actual Value}&$6\times 10^{-4}$&$7\times 10^{-4}$&$2.5\times 10^{-4}$&$5\times 10^{-3}$&$10\times 10^{-3}$\\
				\bottomrule
			\end{tabular}
		\end{table*}
		Figure~\ref{Sim_UAS} shows the UAS tracking results where it is noticed that the estimated states quickly approach and converge to the true simulated response of the system. Further, as the estimated states approach the simulated states, the norm of the error is observed to subside in magnitude and approaches zero, and the UAS gain is observed to also grow smaller as a consequence of the reduction in estimation error. Figure~\ref{Sim_Para} shows the time trajectories of the parameters of the friction model as the UAS converges to the correct states of the real system. It is seen that the parameters do indeed stabilize around some final values. The values considered to be the parameter estimates were taken just after the error norm reduced below some user defined threshold, which here is recommended to be below $ \left\lVert e(t) \right\rVert_2=0.01 $. The identified values are presented in Table~\ref{table:IDRSim}. Figure~\ref{Sim_Val} showcases a validation test of the chosen parameter estimates. The experimental response is plotted against the response realized from both the UAS estimated parameters as well as the initial parameters. The UAS-provided parameter estimates generate a response very similar to the simulated response, which establishes that the proposed approach was successful at identifying high quality parameter estimates in simulation.
		
		Further analysis is presented in frequency domain in Figure~\ref{Sim_Val_Frq} where both the phase and spectrum of the time responses of the angular positions of the arm and pendulum links are presented. It is evident that the algorithm is successful at moving from the initial estimates to ones that are more representative of the system. The portrait is quite complex considering the nonlinearity of the system, and it helps explain the compression and expansion effect present in the time responses presented as a wide array of harmonics is present in the response. The spectrum plots show good matching between estimated and actual parameters in this test.
		
		\subsection{Experiment}
		Here, the performance of the proposed algorithm is validated experimentally. The initial conditions were approximately $ \theta_0(0) = 0.2^\circ $, $ \theta_1(0) = 126^\circ $, $ \dot{\theta}_0(0) = 0^\circ/sec $, and $ \dot{\theta}_1(0) = 0^\circ/sec $. As with the test in simulation, the initial friction model parameter estimates were randomly generated also considering the upper/lower limits from Table~\ref{table:ID0}. 
		The Runge-Kutta method of order $ 4 $ with a sampling time of $ 0.0001 s $ was also used here to propagate the equations in time and solve the ODEs. The adaptation parameters used to test the algorithm are detailed in Table~\ref{table:ID0}.
		
		Figure~\ref{Exp_UAS} shows the UAS tracking results where it is noticed that the estimated states quickly approach and converge to the true response of the system. Throughout that process, the norm of the error is observed to subside in magnitude and approach zero, and the UAS gain is observed to follow a similar trend to the estimation error norm. Unlike the simulated case in the previous section, the UAS gain does not completely reduce to zero and intermittent spikes are observed in the response. Figure~\ref{Exp_Para} shows the time trajectories of the parameters of the friction model as the UAS tracks the states of the real system. It is seen that the parameters settle and oscillate within some final range of values, which makes sense following the behavior of the UAS gain. The values considered to be the parameter estimates were taken just after the error norm reduced below some user defined threshold, which here is recommended to be below $ \left\lVert e(t) \right\rVert_2=0.05 $. Due to the extra uncertainty present in the experimental data, it is recommended to average the parameters estimates during adaptation between the error subsiding and the system motion seizing. This provided tighter bounds on the parameters estimates in the experimental test. The identified values are presented in Table~\ref{table:IDRExp}. Figure~\ref{Exp_Val} showcases a validation test of the chosen parameter estimates. The experimental response is plotted against the response realized from both the UAS estimated parameters as well as the initial parameter guesses. It is clear that the UAS-provided parameter estimates generate a response that is qualitatively similar to the experiment. However, unlike the simulation test case, an exact match is not realized between the experimental response and the estimated one. We reason this to be due to the limitations of friction models available in the literature in that they only approximate and not fully characterize the damping phenomenon. Nevertheless, the proposed approach proved to be computationally highly efficient at identifying quality parameter estimates using experimental data.
		
		The same analysis is performed  in the frequency domain. In Figure~\ref{Exp_Val_Frq}, the phase and the spectrum of the time responses of the angular positions of the arm and pendulum links are presented. Similar to the experimental results, the algorithm is successful at moving from the initial estimates to the ones that are more representative of the system. However, the spectrum plots do not show as good matching between the estimated and the actual parameters in this test as it is the case with simulation. This is due to the difficulty concerning the description of the actual friction present in the system. Still, very good qualitative parameter estimates were realized using the proposed approach. In the next section, we explore alternative methods to address this problem. 
		\begin{table*}
			\centering
			\caption{Friction model parameter experimental identification results with the UAS method.\label{table:IDRExp}}
			\begin{tabular}{lccccc}
				\toprule
				{Parameter}&{$\mu_{d_0}$}	& {$\mu_{s_0}$}	& {$\mu_{v_0}$}  & {$\dot{\theta}_{t_0}$} &  {$ F_{nt_0} $} \\
				\midrule
				{Initial Guess}&$1.090\times 10^{-2}$&$8.239\times 10^{-3}$&$4.971\times 10^{-3}$&$2.726\times 10^{-3}$&$1.297\times 10^{-2}$\\
				\midrule
				{Estimate}&$1.575\times 10^{-3}$&$1.575\times 10^{-3}$&$3.006\times 10^{-4}$&$3.006\times 10^{-4}$&$2.065\times 10^{-3}$\\
				\midrule[1pt]
				{Parameter}&{$\mu_{d_1}$}	& {$\mu_{s_1}$}	& {$\mu_{v_1}$}  & {$\dot{\theta}_{t_1}$} &  {$ F_{nt_1} $} \\
				\midrule
				{Initial Guess}&$1.957\times 10^{-2}$&$1.796\times 10^{-2}$&$1.607\times 10^{-2}$&$1.342\times 10^{-2}$&$9.213\times 10^{-3}$\\
				\midrule
				{Estimate}&$3.046\times 10^{-3}$&$3.065\times 10^{-3}$&$3.006\times 10^{-4}$&$3.006\times 10^{-4}$&$2.065\times 10^{-3}$\\
				\bottomrule
			\end{tabular}
		\end{table*}
		\begin{figure*}
			\centering
			\includegraphics[width=14 cm]{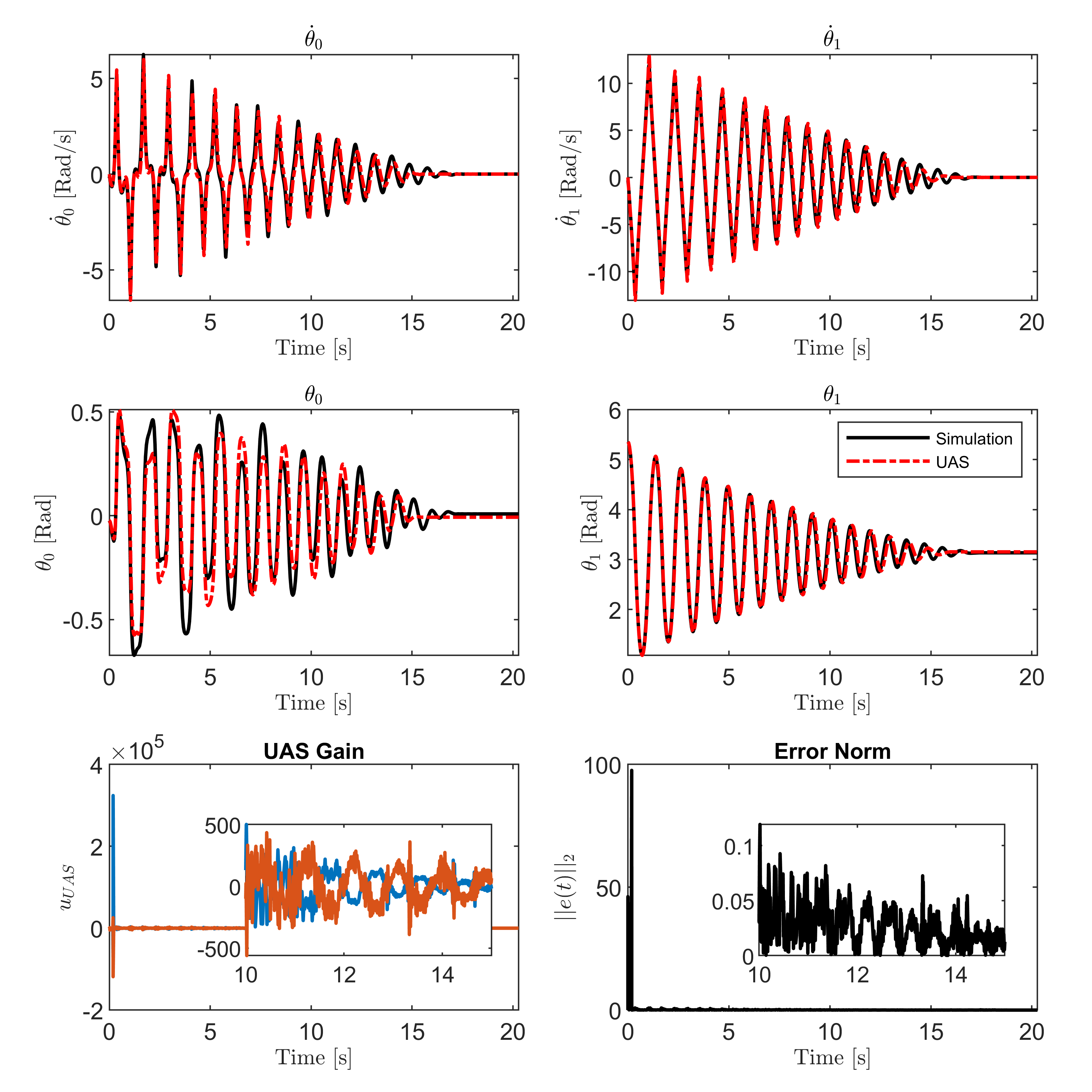}
			\caption{Performance of the UAS observer during the experiment.\label{Exp_UAS}}
		\end{figure*}  
		\begin{figure*}
			\centering
			\includegraphics[width=16 cm]{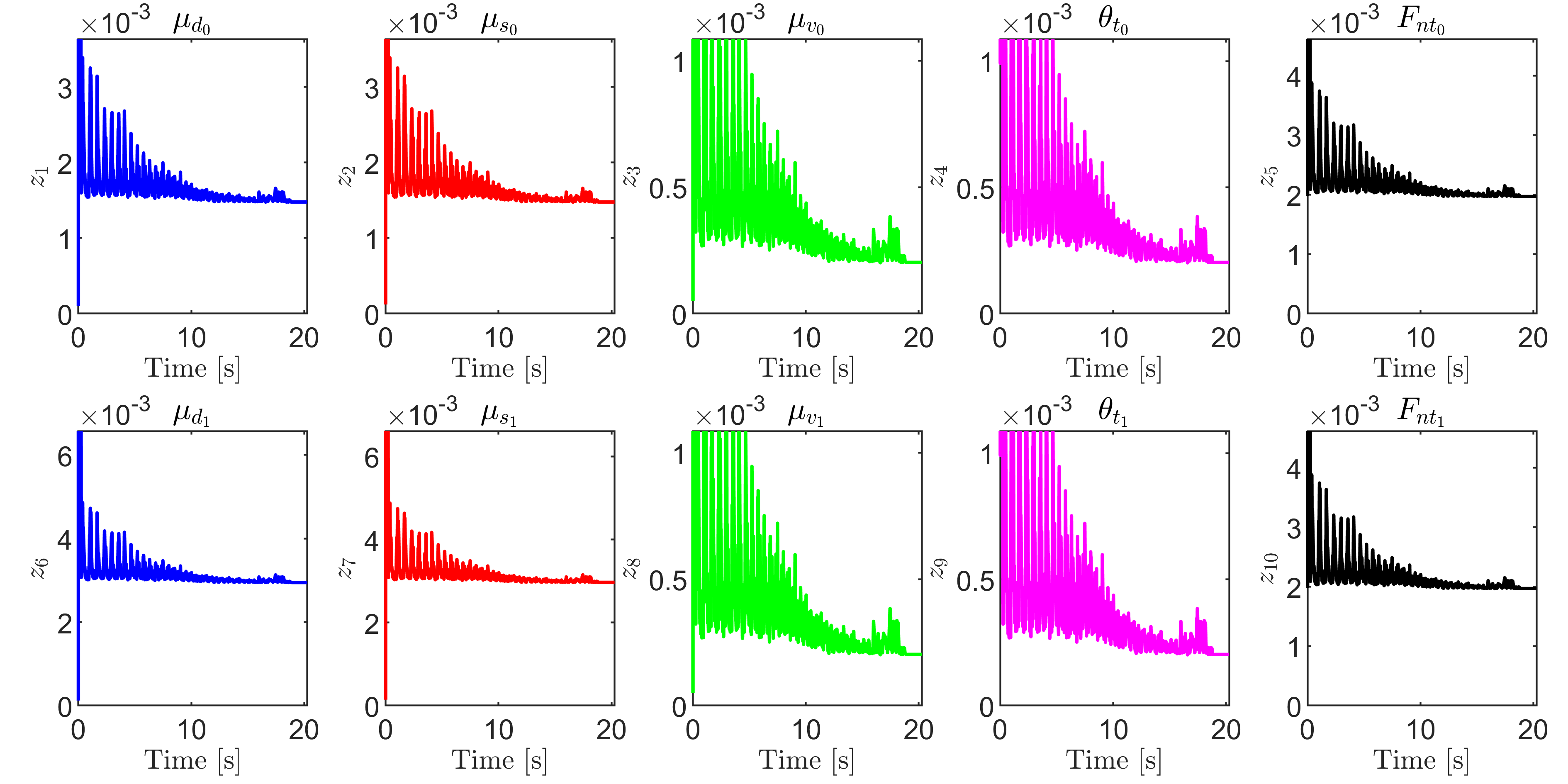}
			\caption{Parameter trajectories during the experiment.\label{Exp_Para}}
		\end{figure*}
		\begin{figure*}
			\centering
			\includegraphics[width=14 cm]{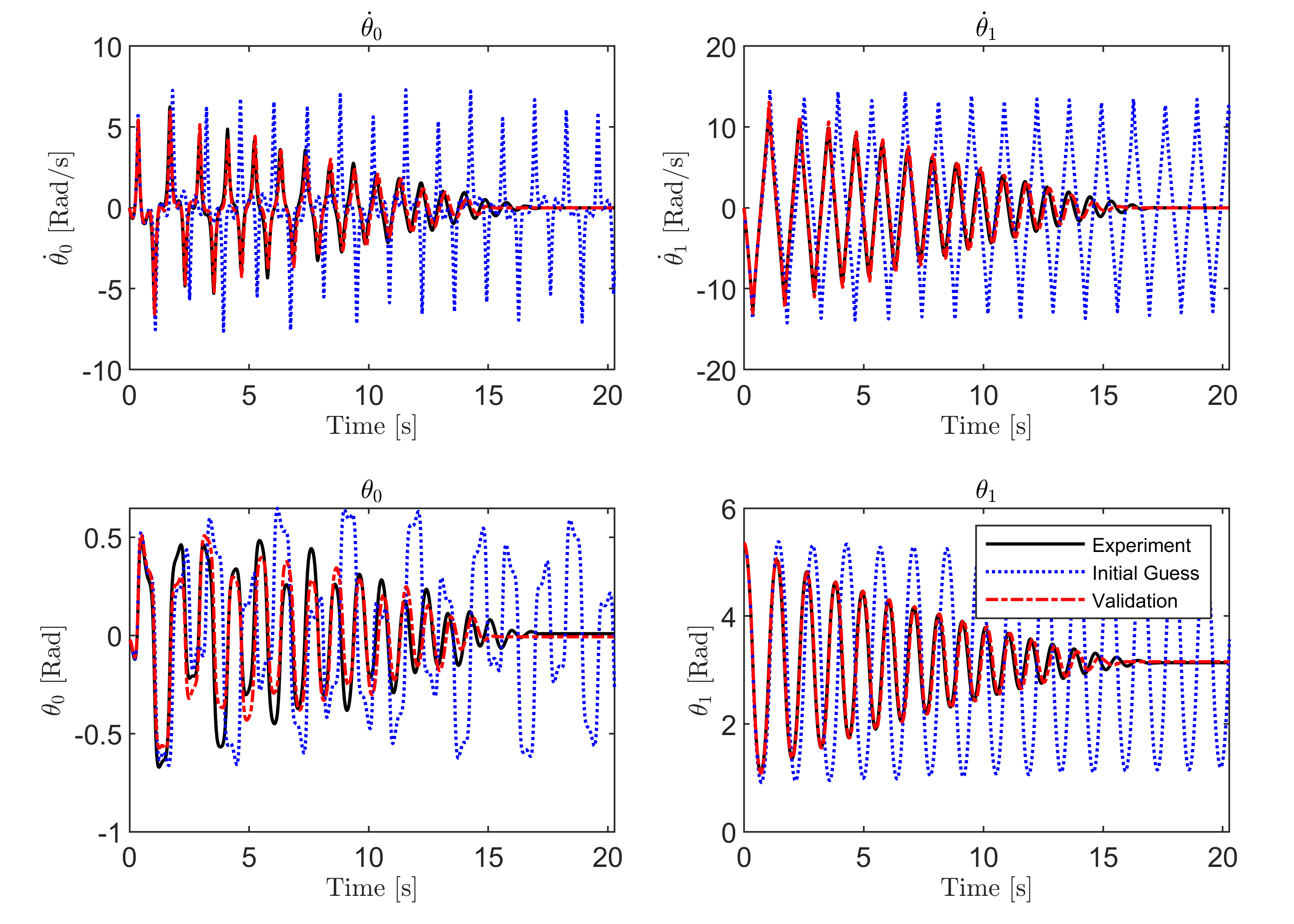}
			\caption{Validation of the identified parameters during the experiment.\label{Exp_Val}}
		\end{figure*}  
		\begin{figure*}
			\centering
			\includegraphics[width=14 cm]{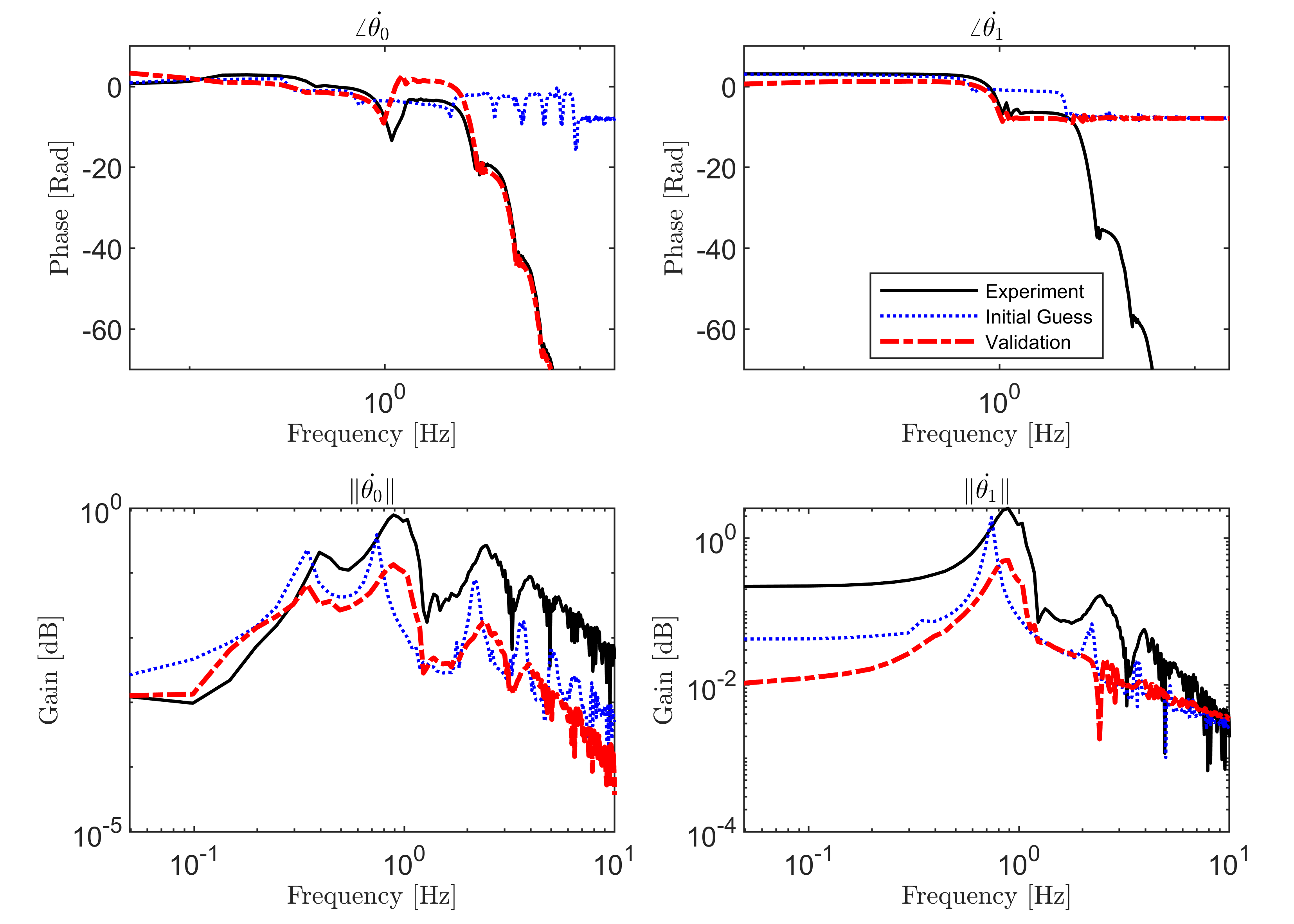}
			\caption{Frequency portrait of the identified parameters from the experiment.\label{Exp_Val_Frq}}
		\end{figure*} 
		
		\subsection{Comparison with conventional methods}
		Here, we compare the proposed approach to a conventional optimization-based one. Specifically, a nonlinear grey-box identification in the MATLAB environment is selected to exploit the known structure of the dynamic model and generate estimates for its parameters. The results for all the cases are shown in Figure~\ref{bar}, where the quality of fit is measured through the coefficient of determination $ R^2 $ metric and is presented in Equation~\ref{NRMSE}.
		
		\begin{equation}\label{NRMSE}
			{fit=100 \times \left(1- \frac{\sum ( \hat{x}-x_{ref})^2 }{\sum(x_{ref} - \bar{x}_{ref} )^2} \right)}
		\end{equation} 
		where $ x_{ref} $ is the reference signal, and $ \hat{x} $ is the estimate of $ x_{ref} $. 
		
		Aditionally, the computation time is computed using
		\begin{equation}\label{T_metric}
			\begin{split}
				&\text{Normalized Computation Time} =\\& \frac{\text{Computation Time}}{\text{Experiment Time} \times \text{Sampling Frequency}}
			\end{split}
		\end{equation}
		
		Here, the same simulation parameters and initial conditions presented in the previous sections are used. Due to the high nonlinearity of the problem, optimization alone is reportedly not always a viable solution to address this problem. Testing shows that in simulation, optimization alone is not capable of reaching acceptable parameter estimates with a goodness of fit of only $ 13\% $ and $ 3\% $ for the Arm Joint and the Pendulum Joint responses, respectively. The proposed algorithm, in contrast, provided quick estimates that provided much better goodness of fit metrics of $ 98\% $ for both responses.  We conclude that the proposed approach is more robust to noisy data than the optimization approach. In the experimental test, the proposed method, while it is able to quickly generate quality estimates, it provides estimates, which can be improved on with a goodness of fit of $ 77\% $ and $ 91\% $ for the responses of the Arm Joint and the Pendulum Joint, respectively. The optimization approach, however, exhibited a low performance in terms of goodness of fit with only $ 3\% $ and $ 24\% $ for the responses of the Arm Joint and the Pendulum Joint, respectively. That result was also achieved at a considerably higher computational cost with optimization taking approximately $ 500$ and $ 1300 $ times more computation time than the UAS approach for the simulation and the experiment, respectively. Table~\ref{table:Comp} summarizes these results.
		\begin{figure}
			\centering
			\includegraphics[width=3in]{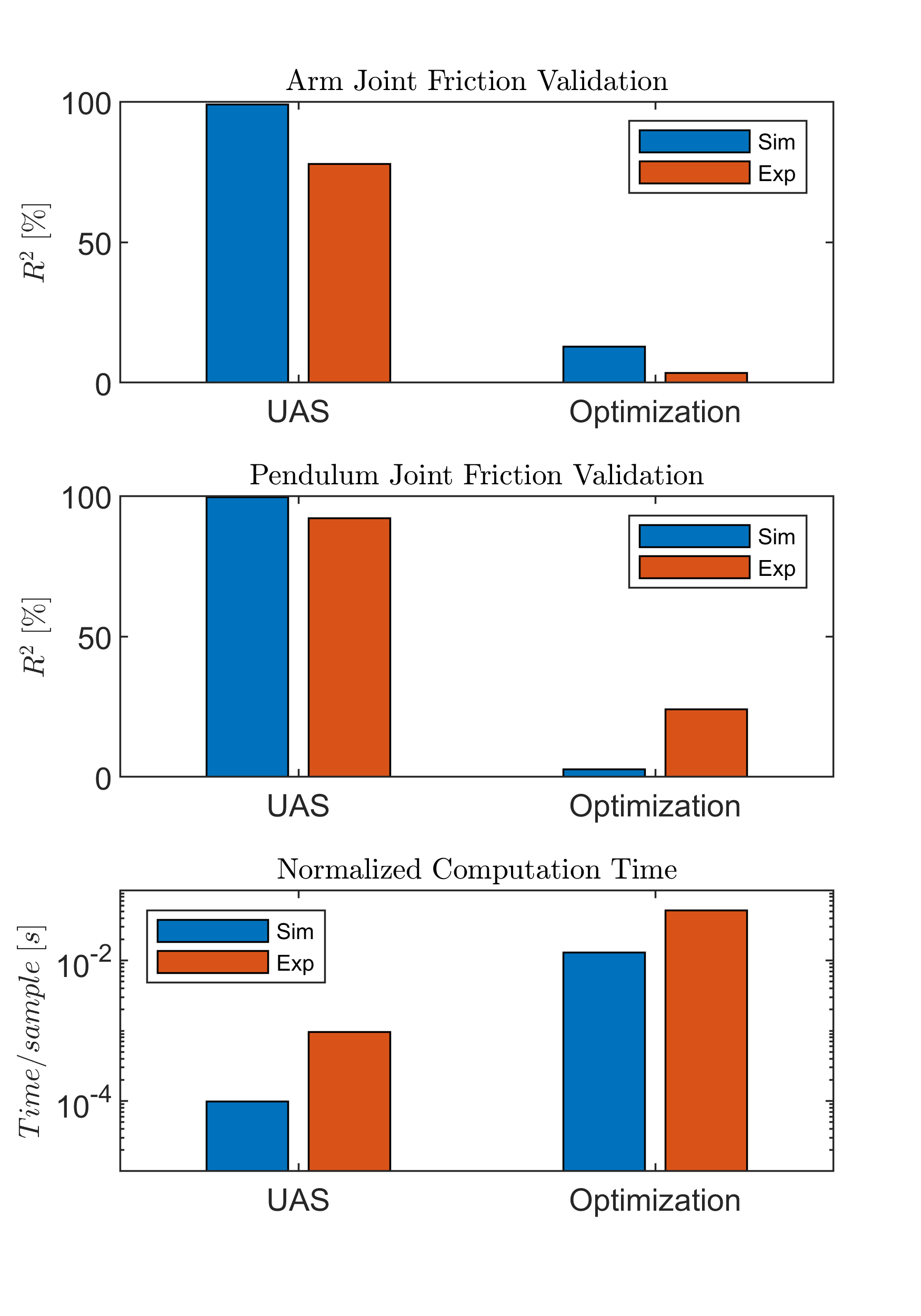}
			\caption{Performance Comparison.\label{bar}}
		\end{figure}  
		
		\begin{table*}
			\centering
			\caption{Computation time and quality of estimates.\\ Computation time is presented along with experiment runtime.}\label{table:Comp}
			\begin{tabular}{llcccc}
				\toprule
				&& \multicolumn{2}{c}{Proposed}	& \multicolumn{2}{c}{Optimization}\\
				\cmidrule(r){3-4} \cmidrule(l){5-6}
				&& Arm Joint&Pendulum Joint& Arm Joint&Pendulum Joint \\
				\midrule
				\multirow{2}{*}{Sim.}&{Test Duration}\quad\quad\,\;[s]			& \multicolumn{2}{c}{$35$}  & \multicolumn{2}{c}{$35$}  	\\
				&{Computation Time}\;[s]			& \multicolumn{2}{c}{$ 3.43$}     & \multicolumn{2}{c}{$ 4504.6 $}	\\
				&		{Goodness of fit}\quad\;\,\,[\%]				& $ 98.97 $		& $ 99.53 $& $12.77$ & $2.80$	\\
				\midrule
				\multirow{2}{*}{Exp.}&{Test Duration}\quad\quad\,\;[s]			& \multicolumn{2}{c}{$20.28$} & \multicolumn{2}{c}{$20.28$}   	\\
				&{Computation Time}\;[s]			& \multicolumn{2}{c}{$ 19.49$ } & \multicolumn{2}{c}{$ 10373.7$ }\\
				&     {Goodness of fit}\quad\;\,\,[\%]					& $77.84$&$91.99$						& $ 3.40 $			& $ 24.11 $	\\
				\bottomrule
			\end{tabular}
		\end{table*}
		\section{Conclusion}
		This paper presented a dynamic analysis of a passive tilted Furuta pendulum. The pendulum was tilted to ensure the existence of a stable equilibrium configuration with the weight of each link as the only external force applied to the system. A complete analytical model was derived where full inertia of the pendulum is taken into account, and a comprehensive friction model was selected form the literature. The problem of parameter estimation of the friction in the setup was addressed. A UAS-based high gain observer was designed to estimate the friction model parameters, and it was tested in simulation and experimentally. The obtained results showed that the developed algorithm is effective at identifying parameters in this class of systems. Nonetheless, it was also concluded that the friction phenomenon is not fully described by the 5-parameter per joint friction model. Simulation results, even with a high level of injected and propagated dynamic noise, were close to the dynamic behavior of the system. Experimental results were qualitatively good when compared to conventional optimization based parameter estimation strategies. A frequency analysis was performed and it showed that the system is highly nonlinear with multiple harmonics detected. Differences between the simulation and the experiments are mainly attributed to the difficulty associated with characterizing friction in the joints. Future work includes the implementation of nonlinear control strategy using the developed model to showcase the performance a more accurate model of the system would permit.
		
		\appendices
		\section{Proof Addendum}\label{APP}
		\begin{remark}\label{rmk}
			Here we show how the inequalities in the mathematical justification are formed.
		\end{remark}
		Take the difference of quadratic forms below where the terms are the same as those in Equation~\eqref{eDynamics6}
		\begin{equation} (e-(\mathcal{B}_1-\mathcal{B}_2)\dot{q})^{T}(e-(\mathcal{B}_1-\mathcal{B}_2)\dot{q})\ge0 \end{equation}
		Expanding the above difference gives
		\begin{equation}
			\begin{split}
				e^Te - e^T(\mathcal{B}_1-\mathcal{B}_2)\dot{q} &- ( (\mathcal{B}_1-\mathcal{B}_2)\dot{q} )^Te \\
				&+ ( (\mathcal{B}_1-\mathcal{B}_2)\dot{q} )^T( (\mathcal{B}_1-\mathcal{B}_2)\dot{q} ) \ge 0 
			\end{split}
		\end{equation}
		Simplifying 
		\begin{equation}
			\begin{split}
				e^Te - e^T(\mathcal{B}_1-\mathcal{B}_2)\dot{q} &- \dot{q}^T(\mathcal{B}_1-\mathcal{B}_2)^Te \\
				&+ \dot{q}^T(\mathcal{B}_1-\mathcal{B}_2)^T(\mathcal{B}_1-\mathcal{B}_2)\dot{q} \ge 0 
			\end{split}
		\end{equation}
		Rearranging 
		\begin{equation}
			\begin{split}
				e^Te + \dot{q}^T(\mathcal{B}_1-\mathcal{B}_2)^T(\mathcal{B}_1-\mathcal{B}_2)\dot{q} \ge 
				2e^T(\mathcal{B}_1-\mathcal{B}_2)\dot{q} 
			\end{split}
		\end{equation}
		where the substitution $  e^T(\mathcal{B}_1-\mathcal{B}_2)\dot{q} +  \dot{q}^T(\mathcal{B}_1-\mathcal{B}_2)^Te = 2e^T(\mathcal{B}_1-\mathcal{B}_2)\dot{q} $ was made following the fact that $ e^T(\mathcal{B}_1-\mathcal{B}_2)\dot{q} $ and $ \dot{q}^T(\mathcal{B}_1-\mathcal{B}_2)^Te $ are scalar quantities implying $ e^T(\mathcal{B}_1-\mathcal{B}_2)\dot{q} =(\dot{q}^T(\mathcal{B}_1-\mathcal{B}_2)^Te)^T $.
		
		The same procedure can be applied on the following quadratic difference form
		\begin{equation} (e-\mathcal{Q})^{T}(e-\mathcal{Q})\ge0 \end{equation}
		which is modified by multiplying $ \mathcal{Q} $ by the $ 2\times2 $ identity matrix $ I_{2\times2} $.
		\begin{equation} (e-I_{2\times2}\mathcal{Q})^{T}(e-I_{2\times2}\mathcal{Q})\ge0 \end{equation}
		which leads to the following after applying the same procedure above
		\begin{equation} e^{T}e + \mathcal{Q}^T\mathcal{Q} \ge 2e^T\mathcal{Q}\end{equation}

		%
		
		%
		
		\vfill

	\end{document}